\theoremstyle{plain}
\newtheorem{lemma}{Lemma}
\newtheorem{theorem}[lemma]{Theorem}
\newtheorem{proposition}[lemma]{Proposition}
\theoremstyle{definition}
\newtheorem{definition}[lemma]{Definition}
\newtheorem{example}[lemma]{Example}
\numberwithin{equation}{section}
\numberwithin{lemma}{section}
\newcommand{\id}[1]{\langle#1\rangle}
\newcommand{\Z}{\mathbb{Z}}
\newcommand{\F}{\mathbb{F}}
\definecolor{Gray}{gray}{0.9}
\title{Duality Preserving Gray Maps for Codes over Rings}
\author{Steve Szabo \footnote{e-mail: steve.szabo@eku.edu}\\ Department of Mathematics and Statistics\\ Eastern Kentucky University\\ Richmond, KY 40475\\\\
Felix Ulmer\footnote{e-mail: felix.ulmer@univ-rennes1.fr}\\IRMAR, CNRS, UMR 6625\\ Universit\'e de Rennes 1\\ Universit\'e europ\'eenne de Bretagne\\
Campus de Beaulieu, F-35042 Rennes}
\date{}
\begin{document}
\maketitle

\begin{abstract}
Given a finite ring $A$ which is a free left module over a subring $R$ of $A$, two types of $R$-bases, pseudo-self-dual bases (similar to  trace orthogonal bases) and symmetric bases, are defined which in turn are used to define duality preserving maps from codes over $A$ to codes over $R$. Both types of bases are generalizations of similar concepts for fields. Many illustrative examples are given to shed light on the advantages to such mappings as well as their abundance.
\end{abstract}
{\bf Keywords:} Codes over Rings, Self-Dual Codes, Trace Orthogonal Basis, Symmetric Basis, Codes over Noncommutative Rings. MSC2010: 94B05, 94B60

\section{Introduction and Overview}
Codes over rings have been a major topic in coding theory ever since the discovery of the connection between linear codes over $\Z_4$ to some good non-linear binary codes in \cite{calderbank_1993}. They showed that there is a Gray map from codes over $\Z_4$ to codes over $\F_2^2$ for which the mentioned good non-linear codes were the images of linear  $\Z_4$ codes. Many classical constructions of codes over fields like cyclic codes and geometric codes have been generalized to rings  (\cite{bartley_2008,greferath_1997}). The idea of mapping codes from larger rings onto codes over smaller rings has been a growing topic where not only non-linear but also linear mappings are used.

\begin{example}
Consider the $\F_2$-algebra $A=\F_2[x]/(x^2)$ (see \cite{dougherty_1999_3}). Using the ordered $\F_2$ basis ${\cal B}=(1,x)$ of $A$, any $n$-length code over $A$ can be mapped to a $2n$-length code over $\F_2$ via the bijective Gray map
$$\Phi_{\cal B}: A^n\to(\F_2)^{2n};\; (a_1,a_2,\ldots,a_n)\to (\alpha_{1,1},\alpha_{1,2},\alpha_{2,1},\alpha_{2,2},\ldots,\alpha_{n,1},\alpha_{n,2})$$ where $a_i=\alpha_{i,1}+\alpha_{i,2}x$ is the representation of $a_i$ in the basis $(1,x)$. The image $\Phi_{\cal B}(C)$ of a linear code $C$ over $A$ is a code over $\F_2$. Specifically, consider $g=X^2 + xX + 1\in A[X]$. This polynomial is a divisor of $X^4-1\in A[X]$ and generates a cyclic code $C=(g)/(X^4-1)\subset A[X]/(X^4-1)$ of length $4$ over $A$. In the standard correspondence of $(g)/(X^4-1)$ with $A^4$, $g$ corresponds to the code word  $w_g=(1,x,1,0)$ which is mapped to  $\Phi_{\cal B}(w_g)=(1,0,0,1,1,0,0,0)$. Also, the code word $x\cdot (X^2 + xX + 1)$ is mapped to $(0,1,0,0,0,1,0,0)$. Applying this argument to the code word $X\cdot g$ we see that   $$\left(\begin{array}{cccccccc}1&0&0&1&1&0&0&0\\
0&1&0&0&0&1&0&0\\
0&0&1&0&0&1&1&0\\
0&0&0&1&0&0&0&1
\end{array}\right)$$
is a generator matrix of the binary image $\Phi_{\cal B}(C)$ of $C$.
\end{example}

Self-dual codes have become a central topic in coding theory due to their connections to other fields of mathematics such as block designs \cite{bonnecaze_2000}. In many instances, self-dual codes have been found by first finding a code over a ring and then mapping this code onto a code over a subring through a map that preserves duality. Often, these maps have been found through ad hoc methods. For instance, in  \cite{martinez_2015}, the local Frobenius non-chain rings of order 16 were found and a map that preserves duality was presented for each ring. In the literature, the mappings typically map to codes over $\F_2$, $\F_4$ and $\Z_4$ since codes over these rings have had the most use.

\begin{example} We consider again the cyclic code $C=(g)/(X^4-1)\subset A[X]/(X^4-1)$ of length $4$ over $A=\F_2[x]/(x^2)$ generated by $g(X)=X^2 + xX + 1\in A[X]$. Since $g$ equals its reciprocal polynomial $X^2\cdot g(1/X)$, the code $C$ is self-dual of length $4$ over $A$. However, the binary image $\Phi_{\cal B}(C)$ of $C$ obtained in previous example using the basis ${\cal B}=(1,x)$ is not a self-dual binary code. If we use the $\F_2$-basis ${\cal B}'=(1,x+1)$ to map to $\F_2^8$, then the image $\Phi_{\cal B'}(C)$ of $C$ is the type II binary code $[8,4,4]_2$ generated by
$$\left(\begin{array}{cccccccc}
1&0&1&1&1&0&0&0\\
0&1&1&1&0&1&0&0\\
0&0&1&0&1&1&1&0\\
0&0&0&1&1&1&0&1
 \end{array}\right).$$
Note that the code word $(1,x,1,0)$ corresponding to $g\in A[X]/(X^4-1)$ can be written  $(1\cdot 1+0\cdot(x+1),1\cdot 1+1\cdot(x+1),1\cdot 1+0\cdot(x+1),0\cdot 1+0\cdot(x+1))$ and maps to the first row  $(1, 0 ,1, 1 ,1, 0, 0 ,0)$, while  the second row is the image of $(1+x)\cdot g$. It is well known (\cite{dougherty_1999_3}) that in the $\F_2$-basis ${\cal B'}=(1,x+1)$ of $A$, the image of a self-dual code over $A$ under $\Phi_{\cal B'}$ is always a self-dual code over $\F_2$. Therefore, the mapping corresponding to the basis ${\cal B'}=(1,x+1)$ preserves duality.
\end{example}

A duality preserving Gray map does not always exist (see \cite{martinez_2015_3}). We aim to give criteria for bases which guarantee that the corresponding Gray map $\Phi$  preserves duality. Our setting is as described above: $A$ is a finite ring that is a free left module over a subring $R$ of $A$. We present two types of bases that have this duality preserving property. The first, {\it pseudo-self-dual bases}, are a generalization of trace orthogonal bases which have been defined for algebras over finite fields in \cite{lempel_1980} and \cite{betsumiya_2004}. The second, {\it symmetric bases}, are a generalization of the same for finite fields defined in \cite{mouaha_1992}. In the case of symmetric bases, it is required that $A$, not only be a left $R$-module, but a left $R$-algebra which in turn then requires that $R\subset Z(A)$.

Other than in \cite{alfaro_2015}, where a criterion is given to find a basis for $\F_q[x]/(x^t)$ that maps self-dual codes over $\F_q[x]/(x^t)$ to self-dual codes over $\F_q$, the authors are unaware of general methods for finding duality preserving maps. The methods herein are much more general than that in \cite{alfaro_2015}. It is shown here that the criteria for an $\F_q$-basis for $\F_q[x]/(x^t)$ to preserve duality given in \cite{alfaro_2015} is equivalent to the basis being symmetric (see Proposition \ref{prop_alf}).

The paper is organized as follows: Section \ref{sect_prelim} contains preliminaries and definitions needed throughout the rest of the paper. Section \ref{sect_duality} lays out the two methods for finding duality preserving bases. In section  \ref{sect_examples} we study the behavior of self-dual cyclic codes under duality preserving Gray maps and show that some Gray maps are better than others in terms of hamming distance for this family of codes.

\section{Preliminaries}
\label{sect_prelim}
For a ring $A$, $Aut(A)$ is the {\bf automorphism group} of $A$, a subset $C$ of $A^n$ is an {\bf $n$ length code over $A$} and if $C$ is a left $A$-submodule of $A^n$ then $C$ is an {\bf $A$-linear code} over the alphabet $A$. For a finite ring $A$ and a subgroup $H$ of $Aut(A)$, the {\bf fixed subring} of $H$ is $A^H=\{a\in A|h(a)=a\textrm{ for all }h\in H\}$ and the {\bf trace function with respect to $H$} is
\[
Tr_H:A\rightarrow A;\; a\mapsto\sum_{h\in H}h(a).
\]

\begin{lemma}
\label{lemma_trace}
Let $A$ be a ring and $H$ be a subgroup of $Aut(A)$. The trace function with respect to $H$ is both a left and right $A^H$-linear map and for $a\in A$, $Tr_H(a)\in A^H$.
\end{lemma}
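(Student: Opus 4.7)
The plan is to verify the three claims of the lemma in a natural order: first the stability statement $Tr_H(a)\in A^H$, then additivity, then the left and right $A^H$-homogeneity. All three should follow by short direct calculations once the correct reindexing of the sum over $H$ is in hand.

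First I would fix $a\in A$ and show $Tr_H(a)$ is fixed by every $g\in H$. Applying $g$ to $Tr_H(a)=\sum_{h\in H}h(a)$ and using that $g$ is additive gives $g(Tr_H(a))=\sum_{h\in H}(gh)(a)$. Since $H$ is a group, left multiplication by $g$ permutes $H$, so reindexing $h'=gh$ recovers $\sum_{h'\in H}h'(a)=Tr_H(a)$. This establishes $Tr_H(a)\in A^H$.

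For additivity, I would simply use that each $h\in H$ is a ring automorphism, hence additive: $Tr_H(a+b)=\sum_{h\in H}h(a+b)=\sum_{h\in H}h(a)+\sum_{h\in H}h(b)=Tr_H(a)+Tr_H(b)$. For left $A^H$-homogeneity, take $r\in A^H$ and compute $Tr_H(ra)=\sum_{h\in H}h(ra)=\sum_{h\in H}h(r)h(a)$; since $h(r)=r$ for every $h\in H$ by definition of $A^H$, the factor $r$ pulls to the left of the sum, yielding $r\cdot Tr_H(a)$. Right $A^H$-homogeneity is symmetric: $Tr_H(ar)=\sum_{h\in H}h(a)h(r)=\sum_{h\in H}h(a)\cdot r=Tr_H(a)\cdot r$.

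There is no real obstacle here; the only point that requires a moment of care is the distinction between \emph{linearity over $A^H$} and \emph{linearity over $A$}. The argument works precisely because elements of $A^H$ are fixed pointwise by every $h\in H$, so they factor out of each term of the sum; an arbitrary element of $A$ would not, which is exactly why the conclusion is $A^H$-bilinearity rather than $A$-bilinearity. Once that observation is made, each step is a one-line verification.
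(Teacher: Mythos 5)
Your proof is correct and follows essentially the same route as the paper: factor elements of $A^H$ out of each summand for left and right $A^H$-linearity, and reindex the sum under composition with $g\in H$ to show $Tr_H(a)\in A^H$. The only difference is that you spell out additivity and the reindexing step a bit more explicitly than the paper does, which is fine.
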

\begin{proof}
Let $a\in A$, $b\in A^H$ and $g\in H$. Then
\[
Tr_H(ba)=\sum_{h\in H}h(ba)=\sum_{h\in H}h(b)h(a)=\sum_{h\in H}bh(a)=b\left(\sum_{h\in H}h(a)\right).
\]
So, $Tr_H$ is left $A^H$-linear. Similarly, $Tr_H$ is right $A^H$-linear. Also,
\[
g(Tr(a))=g\left(\sum_{h\in H}h(a)\right)= \sum_{h\in H}g(h(a))=\sum_{h\in H}h(a)=Tr(a)
\]
showing $Tr_H(a)\in A^H$.
\end{proof}

For a ring $A$, an anti-automorphism $\sigma$ on $A$ and a left $A$-module $M$, a {\bf $\sigma$-sesquilinear form} on $M$ is a map $\langle\cdot,\cdot \rangle:M\times M\to A$ such that if $x,y,z\in M$ and $a\in A$ then $\langle x+z,y\rangle=\langle x,y\rangle+\langle z,y\rangle$, $\langle x,y+z\rangle=\langle x,y\rangle+\langle x,z\rangle$, $\langle ax,y\rangle=a\langle x,y\rangle$ and $\langle x,ay\rangle=\langle x,y\rangle\sigma(a)$. In addition, if $\sigma(\langle x, y\rangle)=\langle y,x\rangle$ then the form is called a {\bf $\sigma$-hermitian form}. A $\sigma$-sesquilinear form with the property that $\id{x,y}=0\iff\id{y,x}=0$ is called {\bf reflexive}. Clearly a $\sigma$-hermitian form is reflexive.

\begin{proposition}
\label{prop_hform}
Let $A$ be a ring and $\sigma$ be an anti-automorphism on $A$. Define the map
\[
\id{\cdot,\cdot}:A^k\times A^k\to A;~~\id{x,y}=\sum_{i=1}^k x_i \sigma(y_i).
\]
Then $\id{\cdot,\cdot}$ is a $\sigma$-sesquilinear form. Furthermore, $\id{\cdot,\cdot}$ is a $\sigma$-hermitian form if and only if $\sigma$ is involutory, i.e. $\sigma^2=id$.
\end{proposition}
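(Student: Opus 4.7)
The plan is a direct verification: check each of the four $\sigma$-sesquilinearity axioms by unwinding the definition, and then establish the hermitian equivalence separately by computing $\sigma(\id{x,y})$ and comparing it with $\id{y,x}$.

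For sesquilinearity, additivity in each slot follows from additivity of $\sigma$, which is part of being an anti-automorphism. Left $A$-linearity in the first slot, $\id{ax,y} = a\id{x,y}$, is immediate from left distributivity in $A$. The step that actually uses the anti-automorphism property is the second slot, where
\[
\id{x,ay} = \sum_{i=1}^k x_i \sigma(ay_i) = \sum_{i=1}^k x_i \sigma(y_i)\sigma(a) = \left(\sum_{i=1}^k x_i \sigma(y_i)\right)\sigma(a) = \id{x,y}\sigma(a),
\]
with the reversal $\sigma(ay_i)=\sigma(y_i)\sigma(a)$ being the place where it matters that $\sigma$ reverses multiplication.

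For the hermitian claim, I would apply $\sigma$ to $\id{x,y}$ and again reverse using the anti-automorphism property:
\[
\sigma(\id{x,y}) = \sigma\!\left(\sum_{i=1}^k x_i \sigma(y_i)\right) = \sum_{i=1}^k \sigma^2(y_i)\sigma(x_i),
\]
whereas $\id{y,x} = \sum_{i=1}^k y_i\sigma(x_i)$. If $\sigma^2 = \mathrm{id}$, the two sums coincide termwise, and the form is hermitian. Conversely, assuming the hermitian identity holds for all $x,y\in A^k$, I would specialize: for arbitrary $b \in A$, take $x=(1,0,\ldots,0)$ and $y=(b,0,\ldots,0)$, which reduces the identity to $\sigma^2(b)\cdot 1 = b\cdot 1$, giving $\sigma^2(b)=b$ for every $b$.

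The argument is entirely routine; the only genuine subtlety is in the converse direction, where one should resist trying to argue by cancellation in the general ring $A$ and instead use standard-basis inputs to pull out the pointwise identity $\sigma^2=\mathrm{id}$ directly.
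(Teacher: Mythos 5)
Your proof is correct and follows essentially the same route as the paper: the same termwise computation using the multiplication-reversing property of $\sigma$ for the forward direction, and a specialization to vectors supported in the first coordinate for the converse (the paper uses $x=(a,0,\dots,0)$, $y=(1,0,\dots,0)$ and notes $\sigma^2$ fixes all values of the form, while you use $x=(1,0,\dots,0)$, $y=(b,0,\dots,0)$ to read off $\sigma^2(b)=b$ directly). The only difference is that you spell out the sesquilinearity axioms, which the paper dismisses as clear.
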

\begin{proof}
Clearly, $\id{\cdot,\cdot}$ is a $\sigma$-sesquilinear form. Assume $\id{\cdot,\cdot}$ is hermitian. For $a\in A$, $a=a\id{(1,0,\dots,0),(1,0,\dots,0)}=\id{(a,0,\dots,0),(1,0,\dots,0)}$. Since $\sigma$ is hermitian, $\sigma^2(a)=\sigma^2(\id{(a,0,\dots,0),(1,0,\dots,0)})=a$ showing $\sigma$ is involutory.

Now, assume $\sigma$ is involutory. For $x,y\in A^k$, $\sigma(\id{x,y})=\sigma(\sum_{i=1}^k x_i \sigma(y_i))=\sum_{i=1}^k \sigma(x_i \sigma(y_i))=\sum_{i=1}^k y_i \sigma(x_i)=\id{y,x}$ showing $\id{\cdot,\cdot}$ is $\sigma$-hermitian.
\end{proof}

When the form in Proposition \ref{prop_hform} is hermitian, it is known as the {\bf standard $\sigma$-hermitian form} on $A^k$ which we denote as $\id{x,y}_{A^k}$. If $\sigma$ is the identity map, $\id{x,y}_{A^k}$ is known as the {\bf standard bilinear form} on $A^k$. In the following an involution denotes an anti-automorphism of order $1$ or $2$. Since the identity map is an involution if and only if $A$ is commutative, we may only consider the standard bilinear form on $A^k$ when $A$ is commutative. If $R$ is a subring of $A$ such that $\sigma(R)=R$ then $\sigma|_{R}$ is an involution on $R$. Then, if entries in $A^k$ are restricted to $R^k$, the standard $\sigma$-hermitian form on $A^k$ is the standard $\sigma|_{R}$-hermitian form on $R^k$.

We will specialize to the standard $\sigma$-hermitian form on $A^k$ for which we define the orthogonal of a code. From Proposition \ref{prop_hform}, we know then that $\sigma$ is an involution. For a finite ring $A$, the standard $\sigma$-hermitian form on $A^k$ and an $A$-linear code $C\subset A^k$, the {\bf dual code} of $C$, denoted by $C^\perp$, is $C^\perp=\{v\,|\,\id{v,c}_{A^k}=0,\forall c\in C\}$. It is immediate that $C^\perp$ is a left $A$-module, therefore also an $A$-linear code. A code is {\bf self orthogonal} if $C\subset C^\perp$ and {\bf self dual} if $C=C^\perp$.

The dual we have defined is typically referred to as the left dual. There is an analog notion of a right dual. These of course when working over a commutative ring are identical. Furthermore, since we have defined the dual based on a hermitian form and a hermitian form is reflexive (see definition above), the left and right duals coincide over non-commutative rings as well, that is
\[
\{v\,|\,\id{v,c}_{A^k}=0,\forall c\in C\}=\{v\,|\,\id{c,v}_{A^k}=0,\forall c\in C\}.
\]

In \cite{wood_1999}, Wood showed, amongst other fundamental results on codes over Frobenius rings, that using the standard bilinear form to define the dual $C^\perp$ of an $n$-length linear code $C$ over a Frobenius ring $A$, $|C||C^\perp|=|A|^n$. This result was extended in \cite{szabo_2016} to bilinear forms and sesquilinear forms in general. The following is a specialization of that result to our setting.

\begin{lemma}
\label{lem_wood}
Let $A$ be a finite Frobenius ring and $C$ be a linear code over $A$. Then
$|C||C^\perp|=|A|^n$.
\end{lemma}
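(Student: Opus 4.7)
The plan is to invoke Wood's character-theoretic characterization of finite Frobenius rings from \cite{wood_1999}: $A$ is Frobenius if and only if there exists a \emph{generating character} $\chi\colon(A,+)\to\mathbb{C}^\times$ whose kernel contains no nonzero one-sided ideal of $A$. The payoff is that the map
\[
\Psi\colon A^n\longrightarrow \widehat{A^n},\qquad v\longmapsto \bigl(c\mapsto \chi(\id{v,c}_{A^n})\bigr),
\]
is an isomorphism of finite abelian groups. Additivity of $\Psi$ is clear; injectivity follows by plugging in $c=a\cdot e_j$ for each standard basis vector $e_j$, which yields $\chi(v_j\sigma(a))=1$ for all $a\in A$, so that $v_jA\subseteq\ker\chi$ and hence $v_j=0$ by the Frobenius hypothesis. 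Since $|A^n|=|\widehat{A^n}|$, injectivity upgrades to a bijection.

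First I would show that under $\Psi$ the algebraic dual $C^\perp$ corresponds exactly to the character-theoretic annihilator
\[
C^\circ:=\{v\in A^n : \chi(\id{v,c}_{A^n})=1\text{ for all }c\in C\}.
\]
The inclusion $C^\perp\subseteq C^\circ$ is immediate from $\chi(0)=1$. For the reverse inclusion, suppose $v\in C^\circ$ and fix $c\in C$; because $C$ is left $A$-linear, $ac\in C$ for every $a\in A$, so the sesquilinearity of the standard $\sigma$-hermitian form gives
\[
1=\chi(\id{v,ac}_{A^n})=\chi(\id{v,c}_{A^n}\,\sigma(a)).
\]
Since $\sigma$ is an anti-automorphism, $\sigma(a)$ ranges over all of $A$ as $a$ does, so $\id{v,c}_{A^n}\cdot A\subseteq\ker\chi$. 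The Frobenius property forces this right ideal to be zero, whence $\id{v,c}_{A^n}=0$ and $v\in C^\perp$.

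Second, I would invoke standard Pontryagin duality: for a subgroup $H$ of a finite abelian group $G$, the annihilator subgroup $\{\chi\in\widehat{G}:\chi|_H\equiv1\}$ is canonically isomorphic to $\widehat{G/H}$ and hence has order $|G|/|H|$. Applying this to $H=C\leq G=A^n$ and transporting through the isomorphism $\Psi$ gives $|C^\circ|=|A^n|/|C|$, so that $|C|\cdot|C^\perp|=|C|\cdot|C^\circ|=|A|^n$.

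The main obstacle is the first step, namely the identification $C^\perp=C^\circ$, which is where the Frobenius hypothesis is actually used; the counting that follows is purely formal from Pontryagin duality. Since the lemma is cited from \cite{wood_1999}, in practice the proof reduces to verifying that the present conventions (standard $\sigma$-hermitian form with $\sigma$ an involution, and left $A$-linear codes) match Wood's hypotheses, which the argument above makes explicit.
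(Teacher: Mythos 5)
Your proof is correct. Note, however, that the paper does not prove this lemma at all: it is stated as a consequence of Wood's results and cited directly to \cite{wood_1999}, so there is no internal argument to compare against. What you have written is essentially Wood's own proof, specialized to the standard $\sigma$-hermitian form: a generating character $\chi$ (whose kernel contains no nonzero one-sided ideal), the induced isomorphism $A^n\cong\widehat{A^n}$, the identification of $C^\perp$ with the character-theoretic annihilator of $C$, and the order count $|A^n|/|C|$ from Pontryagin duality. The details check out: injectivity of $\Psi$ and the inclusion $C^\circ\subseteq C^\perp$ both use that $\sigma$ is bijective so that $\sigma(a)$ sweeps out $A$, producing a right ideal ($v_jA$, respectively $\langle v,c\rangle_{A^n}A$) inside $\ker\chi$, and left $A$-linearity of $C$ is exactly what makes $ac\in C$ available; these hypotheses match the paper's conventions. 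Two small points worth flagging: the fact that a generating character's kernel excludes nonzero right ideals as well as left ideals is itself one of Wood's theorems (the two one-sided conditions are equivalent), so your appeal to it is legitimate but not free; and in the final step it is cleanest to say that $\Psi$ maps $C^\circ$ onto the annihilator of $C$ in $\widehat{A^n}$ (equivalently, that $v\mapsto\Psi(v)|_C$ is a surjection $A^n\to\widehat{C}$ with kernel $C^\circ$), which is exactly the Pontryagin-duality count you invoke.
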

For this and other reasons presented in \cite{wood_1999}, most coding theorists restrict their study to codes over Frobenius rings. For the definition and details about Frobenius rings see \cite{wood_1999}. We will keep things general and not restrict the development of the theory to Frobenius until necessary. To illustrate one of the difficulties of working on codes over non-Frobenius rings we provide the following example.

\begin{example}
Let $A=\F_2[u,v]/(u^2,v^2,uv)$ which is of order 8. The Jacobson radical of $A$ is $J(A)=(u,v)$ and the socle of $A$ is $S(A)=(u,v)$. A finite Frobenius ring $R$ can be characterized by having the property that $R/J(R)\cong S(R)$ as left $R$-modules. We see that as $A$-modules, $A/J(A)=\{J(A),1+J(A)\}\cong\F_2$ but $Soc(R)=(u)+(u+v)+(v)\cong\F_2^3$. So, $A$ is non-Frobenius. Consider the ideal $C=(u,v)\lhd A$ as a 1-length code over $A$. We see that $C=C^\perp$. So, $|C||C^\perp|=16>8=|A|$.
\end{example}

\section{Duality Preserving Bases}
\label{sect_duality}
Throughout this section let $n\in \mathbb{N}\setminus\{0\}$, let $A$ be a finite unitary ring which is a free left module over a unitary subring $R$ of $A$, $\mathcal{B}=(v_1,\dots,v_r)$ be an ordered left $R$-basis for $A$ and $\sigma$ be an involution on $A$ such that $\sigma(R)=R$.  Define the  maps
\[
\rho:A\rightarrow R^r;~~\alpha_1v_1+\dots+\alpha_rv_r\mapsto(\alpha_1,\dots,\alpha_r)
\]
and
\[
\varPhi:A^n\rightarrow R^{rn};~~(a_1,\dots,a_n)\mapsto(\rho(a_1),\dots,\rho(a_n)).
\]

Typically, the map $\varPhi$ is called a Gray map. With it, we define the \textbf{Gray weight on $A$ with respect to $\mathcal{B}$} as follows: For $z\in A^n$, $W_\mathcal{B}(z)=w_H(\varPhi(z))$ where $w_H$ is the Hamming weight on $R^{rn}$. Given a code $C$ over $A$, for every $R$-basis of $A$ there is an image viewed through this basis and a weight function with respect to this basis for which $\varPhi$ becomes an isometry. The next example illustrates this.

\begin{example}
Assume $A=\F_2\times\F_2$. Any two non-zero elements of $A$ form an $\F_2$-basis. Let $\mathcal{C}=\{(1,0),(0,1)\}$, $\mathcal{D}=\{(1,0),(1,1)\}$ and $\mathcal{E}=\{(1,1),(0,1)\}$. Then for the codeword of length 2 over $A$, $c=((1,0),(1,0),(0,1))$, $\varPhi_\mathcal{C}(c)=(1,0,1,0,0,1)$, $\varPhi_\mathcal{D}(c))=(1,0,1,0,1,1)$ and $\varPhi_\mathcal{E}(c))=(1,1,1,1,1,0)$. So, $W_\mathcal{C}(c)=3$, $W_\mathcal{D}(c)=4$ and $W_\mathcal{E}(c)=5$.
\end{example}

Before moving on, we show the connection between the inner product on $A^n$ and the inner product on $R^{rn}$. To that end we introduce the following, let
\[
M=\mathcal{B}^{^T}\sigma(\mathcal{B})=\left(
                                        \begin{array}{ccc}
                                          v_1\sigma(v_1) & \dots & v_1\sigma(v_r) \\
                                          \vdots & \vdots & \vdots \\
                                          v_r\sigma(v_1) & \dots & v_r\sigma(v_r) \\
                                        \end{array}
                                      \right),
\]
and let $\mathcal{M}$ be the block diagonal $nr\times nr$ matrix with $M$ on the diagonal.

\begin{lemma}
\label{lemma_form}
Let $x,y\in A^n$. Then $
\id{x,y}_{A^n}=\varPhi(x)\mathcal{M}\sigma(\varPhi(y))^T$.
\end{lemma}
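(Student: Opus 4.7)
The plan is to expand both sides of the identity in coordinates with respect to the ordered basis $\mathcal{B}$ and verify that they coincide as the same triple sum in $A$. Write $x=(\alpha_1,\dots,\alpha_n)$ and $y=(\beta_1,\dots,\beta_n)$, and decompose $\alpha_i=\sum_{j=1}^r a_{i,j}v_j$, $\beta_i=\sum_{k=1}^r b_{i,k}v_k$ with $a_{i,j},b_{i,k}\in R$, so that
\[
\varPhi(x)=(a_{1,1},\dots,a_{1,r},\dots,a_{n,1},\dots,a_{n,r}),
\]
and similarly for $\varPhi(y)$.

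First I would compute the left-hand side. Since $\sigma$ is an anti-automorphism and $\sigma(R)=R$, we have $\sigma(\beta_i)=\sum_k \sigma(v_k)\sigma(b_{i,k})$, and therefore
\[
\alpha_i\sigma(\beta_i)=\sum_{j,k}a_{i,j}\,v_j\sigma(v_k)\,\sigma(b_{i,k}),
\]
so summing over $i$ gives
\[
\langle x,y\rangle_{A^n}=\sum_{i=1}^{n}\sum_{j,k=1}^{r}a_{i,j}\,v_j\sigma(v_k)\,\sigma(b_{i,k}).
\]

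Next I would compute the right-hand side using the block-diagonal structure of $\mathcal{M}$, which causes the matrix product to decouple into $n$ independent blocks of size $r$. In the $i$-th block, $(\varPhi(x)\mathcal{M})$ has $k$-th entry $\sum_j a_{i,j}(v_j\sigma(v_k))$, by the definition of $M$. Pairing this with the $i$-th block of $\sigma(\varPhi(y))^T$, whose $k$-th entry is $\sigma(b_{i,k})\in R$, and summing over $i$ and $k$, reproduces exactly the same triple sum $\sum_{i,j,k}a_{i,j}\,v_j\sigma(v_k)\,\sigma(b_{i,k})$, matching the left-hand side.

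There is no real obstacle here beyond careful bookkeeping: the two points that require attention are (i) the anti-automorphism $\sigma$ reverses order, which places $\sigma(b_{i,k})$ to the right of $\sigma(v_k)$ and so on the same side as the $k$-th entry of $\sigma(\varPhi(y))^T$ in the matrix product, and (ii) the $R$-entries of $\varPhi(x)$ and $\sigma(\varPhi(y))^T$ legitimately multiply the $A$-entries of $\mathcal{M}$ because $R\subset A$. Once these are observed, the identity is immediate from comparing the two expansions.
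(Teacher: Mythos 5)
Your proposal is correct and is essentially the paper's own argument written out in coordinates: the paper performs the identical computation in compact matrix form, writing each component as $\rho(x_i)\mathcal{B}^{T}$, using the anti-automorphism property to obtain $\sum_i \rho(x_i)M\sigma(\rho(y_i))^{T}$, and then invoking the block-diagonal structure of $\mathcal{M}$, which is exactly your blockwise triple-sum expansion. No gaps.
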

\begin{proof}
Let $x_i$ and $y_i$ be the $i$-th component of $x$ and $y$ respectively. Note that for $a\in A$, $a=\rho(a)\mathcal{B}^{^T}$. Then
\begin{eqnarray*}
\id{x,y}_{A^n}&=&\sum_{i=1}^n x_i \sigma(y_i)=\sum_{i=1}^n\rho(x_i)\mathcal{B}^{^T}\sigma(\rho(y_i)\mathcal{B}^{^T})=\sum_{i=1}^n \rho(x_i)\mathcal{B}^{^T}\sigma(\mathcal{B})\sigma(\rho(y_i))^{^T}\\
&=&\sum_{i=1}^n \rho(x_i)M\sigma(\rho(y_i))^{^T}=\varPhi(x)\mathcal{M}\sigma(\varPhi(y))^{^T}
\end{eqnarray*}
\end{proof}
In this section we investigate the sufficient conditions on $\mathcal{B}$ so that $\id{x,y}_{A^n}=0$ implies $\id{\varPhi(x),\varPhi(y)}_{R^{nr}}=0$ giving a so called duality preserving basis. We will give two separate conditions on $\mathcal{B}$ that guarantee it will preserve duality.

\subsection{Pseudo-Self-dual Bases}
\label{sect_psd}
In this subsection we consider pseudo-self-dual bases and show that such bases preserve duality. In \cite{lempel_1980}, trace orthogonal bases for finite field extensions were defined. We extend this definition to include ring extensions in our setting as we consider the extension $A\supset R$.

\begin{definition}
For a subgroup $H$ of $Aut(A)$ we define the following. $\mathcal{B}$ is a {\bf $\sigma$-trace orthogonal basis} with respect to $H$ if for $1\leq i,j\leq r$, $Tr_H(v_i\sigma(v_j))=0$ if and only if $i\neq j$. In addition, if there exists $\gamma\in A$ that is not a zero divisor, commutes with elements of $R$ and $Tr_H(v_i\sigma(v_i))=\gamma$ for $1\leq i\leq r$ then $\mathcal{B}$ is called a {\bf $\sigma$-pseudo-self-dual basis} with respect to $H$. Furthermore, if $\gamma=1$, $\mathcal{B}$ is called a {\bf $\sigma$-self-dual basis} with respect to $H$.
\end{definition}

\begin{table}[!htp]
\renewcommand{\arraystretch}{1.5}
\newcommand{\wb}[1]{\colorbox{white}{#1}}
\setlength{\tabcolsep}{3pt}
\caption{\small Number of $\sigma$-pseudo-self-dual bases w.r.t. $H$ of $\F_3(\xi)[x]/(x^2+1)$ over $R$}
{\small\centering
\label{table_1}
\begin{subtable}{1\textwidth}
\begin{tabular}{|c||c|c|c|c|c|c|c|c|c|}
\hline
\backslashbox{$\sigma$}{$H$}&\colorbox{Gray}{$\id{\psi}$}&\wb{$\id{\theta^2}$}&\wb{$\id{\theta^2\psi}$}&\wb{$\id{\theta\psi}$}&\wb{$\id{\theta^3\psi}$}&\wb{$\id{\theta}$}&\wb{$\id{\psi,\theta^2}$}&\wb{$\id{\theta\psi,\theta^2}$}&\wb{$Aut(A)$}\\\hline\cline{2-10}
\colorbox{Gray}{$id$}&64&0&32&0&0&112&112&0&112\\\hline
\colorbox{Gray}{$\psi$}&64&16&64&0&0&112&112&16&112\\\hline
\colorbox{Gray}{$\theta^2$}&96&64&64&0&0&96&96&64&96\\\hline
\colorbox{Gray}{$\theta^2\psi$}&96&32&0&0&0&96&96&32&96\\\hline
$\theta\psi$&32&0&32&0&0&352&352&0&352\\\hline
$\theta^3\psi$&32&0&32&0&0&352&352&0&352\\\hline
\end{tabular}
\caption{$R=\F_3(\xi)$\bigskip}
\end{subtable}
\begin{subtable}{1\textwidth}
\begin{tabular}{|c||c|c|c|c|c|c|c|c|c|}
\hline
\backslashbox{$\sigma$}{$H$}&\wb{$\id{\psi}$}&\wb{$\id{\theta^2}$}&\colorbox{Gray}{$\id{\theta^2\psi}$}&\wb{$\id{\theta\psi}$}&\wb{$\id{\theta^3\psi}$}&\wb{$\id{\theta}$}&\wb{$\id{\psi,\theta^2}$}&\wb{$\id{\theta\psi,\theta^2}$}&\wb{$Aut(A)$}\\
\hline\cline{2-10}
\colorbox{Gray}{$id$}&32&0&64&0&0&112&112&0&112\\\hline
\colorbox{Gray}{$\psi$}&0&32&96&0&0&96&96&32&96\\\hline
\colorbox{Gray}{$\theta^2$}&64&64&96&0&0&96&96&64&96\\\hline
\colorbox{Gray}{$\theta^2\psi$}&64&16&64&0&0&112&112&16&112\\\hline
$\theta\psi$&32&0&32&0&0&352&352&0&352\\\hline
$\theta^3\psi$&32&0&32&0&0&352&352&0&352\\
\hline
\end{tabular}
\caption {$R=\F_3(x+1)$\bigskip}
\end{subtable}
\begin{subtable}{1\textwidth}
\begin{tabular}{|c||c|c|c|c|c|c|c|c|c|}
\hline
\backslashbox{$\sigma$}{$H$}&\colorbox{Gray}{$\id{\psi}$}&\colorbox{Gray}{$\id{\theta^2}$}&\colorbox{Gray}{$\id{\theta^2\psi}$}&\colorbox{Gray}{$\id{\theta\psi}$}&\colorbox{Gray}{$\id{\theta^3\psi}$}&\colorbox{Gray}{$\id{\theta}$}&\colorbox{Gray}{$\id{\psi,\theta^2}$}&\colorbox{Gray}{$\id{\theta\psi,\theta^2}$}&\colorbox{Gray}{$Aut(A)$}\\
\hline\cline{2-10}
\colorbox{Gray}{$id$}&0&0&0&0&0&96&96&0&96\\\hline
\colorbox{Gray}{$\psi$}&0&0&0&0&0&96&96&0&96\\\hline
\colorbox{Gray}{$\theta^2$}&0&0&0&0&0&96&96&0&96\\\hline
\colorbox{Gray}{$\theta^2\psi$}&0&0&0&0&0&96&96&0&96\\\hline
\colorbox{Gray}{$\theta\psi$}&0&0&0&0&0&0&0&0&0\\\hline
\colorbox{Gray}{$\theta^3\psi$}&0&0&0&0&0&0&0&0&0\\
\hline
\end{tabular}
\caption {$R=\F_3$\bigskip}
\end{subtable}}
\begin{tablenotes}
\item A gray highlighted involution $\sigma$ indicates that $\sigma(R)=R$ and a gray highlighted subgroup $H$ indicates that $H$ fixes $R$ element-wise.
\end{tablenotes}
\end{table}

\begin{example}
\label{ex_main}
Assume $A=\F_3(\xi)[x]/(x^2+1)$ where $\xi^2+2\xi+2=0$. Note, $\F_3(\xi)\cong\F_9$. Since $x^2+1=(x + \xi^2)(x + \xi^6)\in\F_3(\xi)[x]$, we see that
\[
A\cong\F_3(\xi)[x]/(x+\xi^2)\oplus\F_3(\xi)[x]/(x+\xi^6)\cong\F_9\oplus\F_9.
\]
The automorphism group of $A$ is the dihedral group $D_4$ of order $8$ generated by the automorphism $\theta:A\to A;\xi\mapsto x+2,x\mapsto \xi^6$ of order 4 and $\psi:A\to A;\xi\mapsto \xi,x\mapsto 2x$ of order 2. So, there are 5 automorphism of order 2: $\psi$, $\theta\psi$, $\theta^2\psi$, $\theta^3\psi$ and $\theta^2$.
Outside of the subgroups of order 2 generated by the elements of order 2, there are 3 proper subgroups, $Aut(A)$, $\id{\theta}$, $\id{\psi,\theta^2}$ and $\id{\theta\psi,\theta^2}$ each of which is of order 4. Using Magma, we have found the number of $\sigma$-pseudo-self-dual bases of $A$ over $R$ with respect to $H$ for each involution of $A$, each subgroup $H$ of $Aut(A)$ and each subring $R$ of $A$. The results are presented in Table \ref{table_1}. In the table, a gray highlighted involution $\sigma$ indicates that $\sigma(R)=R$ and a gray highlighted subgroup $H$ indicates that $H$ fixes $R$ element-wise (which, according to theorem \ref{theo_tob},  will insure a duality preserving basis for the corresponding Gray map $\Phi$). Note that some bases will be found with multiple subgroups.
\end{example}

\begin{example}
\label{GRf} Assume $A=GR(4,2)$, the Galois ring consisting of all elements of the form $\beta_0+\beta_1\xi $ where $\beta_i\in \Z_4$ and $\xi^2+\xi+1=0$. The automorphism group of $A$ is of order $2$ generated by $\theta:A\to A$ defined by $\theta:\xi \mapsto 3\xi+3$. Assume $\sigma=\theta$, $\mathcal{B}=(\xi,\xi+3)$ and $H=Aut(A)=\{id,\theta\}$. Now, $Tr_H(\xi\sigma(\xi))=3$, $Tr_H(\xi\sigma(\xi+3))=0$, $Tr_H((\xi+3)\sigma(\xi))=0$ and $Tr_H((\xi+3)\sigma(\xi+3))=3$ showing that $\mathcal{B}$ is a $\sigma$-pseudo-self-dual basis where $\gamma=3$. With a similar verification we see that if $\sigma=id$, $\mathcal{B}$ is a $\sigma$-pseudo-self-dual basis where $\gamma=3$ as well. In \cite{boucher_2008}, $\{\xi,\xi+3\}$ was used to map Euclidean self-dual codes over $A$ to Euclidean self-dual codes over $\Z_4$.

Using Magma we found that with $\sigma=id$ or $\sigma=\theta$, $A$ has no $\sigma$-self-dual basis over its prime ring $\Z_4$, but has $8$ $\sigma$-pseudo-self-dual bases over $\Z_4$:
\[
\{3\xi+2,\xi+1\},
\{\xi+1,\xi+2\},
\{3\xi+2,3\xi+3\},
\{3\xi+1,\xi\}
\]
\[
\{3\xi+3,\xi+2\},
\{\xi,\xi+3\},
\{3\xi,\xi+3\},
\{3\xi,3\xi+1\}.
\]
\end{example}

The previous examples show that $\sigma$-pseudo-self-dual bases have been used in the literature and also that even for small algebras they may be abundant. We soon give the main result of this subsection which shows that under a $\sigma$-pseudo-self-dual bases, the image of the dual of a code is the dual of the image i.e. preserves duality given some additional conditions. The first condition is that $\sigma(R)=R$ which we have already imposed throughout this section. This is to guarantee that the $\sigma$ hermitian form on $A$ restricts to a hermitian form on $R$. The second condition is that $R\subset A^H$. The next example shows that with out this, a $\sigma$-pseudo-self-dual basis does not necessarily preserve duality.

\begin{example}
\label{ex_main2}
We continue in the setting of Example \ref{ex_main}. Assume $R=\F_3(\xi)$ and $\sigma=\psi$ (notice $\psi(\F_3(\xi))=\F_3(\xi)$). Assume $\mathcal{B}=\{1,\xi^2x+\xi^2\}$. The $R$-basis of $A$, $\mathcal{B}$, is a $\sigma$-pseudo-self-dual basis with respect to $Aut(A)$. Notice $R\not\subset A^{Aut(A)}$. Let $C$ be the $[2,1]$ linear code over $A$ generated by
\[
\left(\begin{array}{cccc}
1&\xi^2x\\
\end{array} \right)
\]
which is a $\sigma$-hermitian self-dual code over $A$. The code $\varPhi(C)$, whose generator matrix is,
\[
\left(\begin{array}{cccc}
1&0&\xi^6&1\\
0&1&2&\xi^2
\end{array}\right)
\]
is not a euclidian self-dual code over $R$.
\end{example}

The next theorem is the first of our two main theorems, each of which provides conditions for mapping the dual of a code over $A$ to the dual of its image over $R$.

\begin{theorem}
\label{theo_tob}
Let $H$ be a subgroup of the automorphism group of $A$ such that $R\subset A^H$ and let $C$ be an $n$ length linear code over $A$. Assume $\mathcal{B}$ is $\sigma$-pseudo-self-dual basis w.r.t $H$. Then
\[
\varPhi(C^\perp)\subset\varPhi(C)^\perp.
\]
Furthermore, if $A$ and $R$ are Frobenius rings then
\[
\varPhi(C^\perp)=\varPhi(C)^\perp.
\]
\end{theorem}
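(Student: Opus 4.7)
The plan is to prove the set equality $\varPhi(C^\perp)=\varPhi(C)^\perp$ by establishing one inclusion with Lemma \ref{lemma_tob} and the other by matching cardinalities using the Frobenius hypothesis.

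For the forward inclusion $\varPhi(C^\perp)\subseteq\varPhi(C)^\perp$, I would take $y\in C^\perp$ and any $x\in C$. Then $\id{x,y}_{A^n}=0$, and Lemma \ref{lemma_tob} directly yields $\id{\varPhi(x),\varPhi(y)}_{R^{nr}}=0$. Since this holds for every $x\in C$, the image $\varPhi(y)$ lies in $\varPhi(C)^\perp$. This handles one inclusion with no further work.

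For the reverse inclusion, I would argue by sizes. Because $\mathcal{B}$ is a left $R$-basis for $A$, the coordinate map $\rho$ is an $R$-module isomorphism, hence so is $\varPhi:A^n\to R^{nr}$. In particular $\varPhi$ is a bijection, so $|\varPhi(C^\perp)|=|C^\perp|$ and $|\varPhi(C)|=|C|$. Since $R\subset A$, the $A$-linear code $C$ is also $R$-linear, so $\varPhi(C)$ is an $R$-linear code in $R^{nr}$. Because $A$ is a free left $R$-module of rank $r$, we have $|A|^n=|R|^{nr}$. Applying Lemma \ref{lem_wood} over both $A$ and $R$ (both assumed Frobenius) gives
\[
|C|\,|C^\perp|=|A|^n=|R|^{nr}=|\varPhi(C)|\,|\varPhi(C)^\perp|,
\]
and cancelling $|C|=|\varPhi(C)|$ yields $|\varPhi(C^\perp)|=|C^\perp|=|\varPhi(C)^\perp|$. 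Combined with the forward inclusion of finite sets of equal cardinality, this forces equality.

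The main obstacle is merely bookkeeping rather than mathematical depth: one has to check that $\varPhi(C)$ genuinely is an $R$-linear code so that Lemma \ref{lem_wood} applies to it, that the standard $\sigma|_R$-hermitian form on $R^{nr}$ is well defined (ensured by $\sigma(R)=R$ and $\sigma^2=\mathrm{id}$), and that $|A|=|R|^r$ follows from $A$ being free of rank $r$ over $R$. Each of these is immediate from the setup fixed at the beginning of Section \ref{sect_duality}, so the real content has already been packaged in Lemma \ref{lemma_tob}, and the theorem reduces to a clean sandwich of one inclusion and one counting step.
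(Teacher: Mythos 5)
Your proposal is correct and follows essentially the same route as the paper: one inclusion from Lemma \ref{lemma_tob}, then a cardinality count via Wood's result (Lemma \ref{lem_wood}) together with $|C|=|\varPhi(C)|$ and $|A|^n=|R|^{rn}$ to force equality. The extra bookkeeping you mention ($\varPhi(C)$ being $R$-linear, $\sigma|_R$ an involution) is implicit in the paper's setup, so nothing is missing.
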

\begin{proof}
Since $\mathcal{B}$ is $\sigma$-pseudo-self-dual basis w.r.t $H$ there exists a $\gamma\in A$ that commutes with $R$, is not a zero divisor and for $1\leq i,j\leq r$ with $i\neq j$,  $Tr_H(v_i\sigma(v_i))=\gamma$ and $Tr_H(v_i\sigma(v_j))=0$. Let $x=(x_1,\ldots,x_n)\in A^n$ and $y=(y_1,\ldots,y_n)\in A^n$ where $x_i=\sum_{j=1}^r\alpha_{ij}v_j\in A$ and $y_i=\sum_{k=1}^r\beta_{ik}v_k\in A$. Assume $\id{ x,y}_{A^n}=0$. Since $R\subset A^H$, by Lemma \ref{lemma_trace} we have that

\begin{eqnarray*}
0&=&Tr_H(0)=Tr_H(\id{ x,y}_{A^n})=Tr_H\left(\sum_{i=1}^n\left(\sum_{j=1}^r\alpha_{ij}v_j\right)\sigma\left(\sum_{k=1}^r\beta_{ik}v_k\right)\right)\\
&=&Tr_H\left(\sum_{i=1}^n\left(\sum_{j=1}^r\alpha_{ij}v_j\right)\left(\sum_{k=1}^r\sigma\left(v_k\right)\sigma\left(\beta_{ik}\right)\right)\right)\\
&=&\sum_{i=1}^n\sum_{j=1}^r\sum_{k=1}^r\alpha_{ij}Tr_H\left(v_j\sigma\left(v_k\right)\right)\sigma\left(\beta_{ik}\right)=\sum_{i=1}^n\sum_{j=1}^r\alpha_{ij}Tr_H\left(v_j\sigma\left(v_j\right)\right)\sigma\left(\beta_{ij}\right)\\
&=&\sum_{i=1}^n\sum_{j=1}^r\alpha_{ij}\gamma\sigma\left(\beta_{ij}\right)=\gamma\sum_{i=1}^n\sum_{j=1}^r\alpha_{ij}\sigma\left(\beta_{ij}\right)\\
&=&\gamma\id{\varPhi(x),\varPhi(y)}_{R^{nr}}.
\end{eqnarray*}
Since $\gamma$ is not a zero divisor, $\id{\varPhi(x),\varPhi(y)}_{R^{nr}}=0$. With this it is easy to see that $\varPhi(C^\perp)\subset\varPhi(C)^\perp$.

Now, $|C|=|\varPhi(C)|$. Since $A$ and $R$ are Frobenius rings, from Lemma \ref{lem_wood}, we have that
\[
|C^\perp|={|A|^n\over |C|}={|R|^{rn}\over |\varPhi(C)|}=|\varPhi(C)^\perp|.
\]
Hence, $\varPhi(C^\perp)=\varPhi(C)^\perp$.
\end{proof}

For an alternate view of the previous result, we return to the notation introduced at the beginning of this section. From Lemma \ref{lemma_form} we have that
\[
\id{x,y}_{A^n}=\varPhi(x)\mathcal{M}\sigma(\varPhi(y))^{^T}.
\]
In the setting of Theorem \ref{theo_tob} we have then
\begin{eqnarray*}
0&=&Tr_H(0)=Tr_H(\id{x,y}_{A^n})=Tr_H(\varPhi(x)\mathcal{M}\sigma(\varPhi(y)^{^T}))\\
&=&\varPhi(x)Tr_H(\mathcal{M})\sigma(\varPhi(y)^{^T})=\varPhi(x)\gamma I_{nr}\sigma(\varPhi(y)^{^T})=\gamma\id{\varPhi(x),\varPhi(y)}_{R^{nr}}.
\end{eqnarray*}
Again, since $\gamma$ is not a zero divisor, $\id{\varPhi(x),\varPhi(y)}_{R^{nr}}=0$. The point here is this. In general, if $Tr_H(\mathcal{M})=\gamma I_{nr}$ which boils down to $Tr_H(M)=\gamma I_{r}$, then $\mathcal{B}$ is a $\sigma$-pseudo-self-dual basis for $A$ over $R$. If in addition, $R\subset A^H$, $\mathcal{B}$ will preserve duality.

\begin{example}
We again continue in the setting of Examples \ref{ex_main} and \ref{ex_main2}. Remember $A=\F_3(\xi)[x]/(x^2+1)$ where $\xi^2+2\xi+2=0$, $R=\F_3(\xi)$ and $\sigma:A\to~A;\xi\mapsto \xi,x\mapsto 2x$. Here we assume $\mathcal{B}=\{1,x\}$. As in Example \ref{ex_main2}, the basis here is a $\sigma$-pseudo-self-dual basis but it is with respect to $\id{\sigma}\subset Aut(A)$. What is different is that $R$ is in the fixed ring of the subgroup i.e. $R\subset A^{\id{\sigma}}$. Remember, the code $C$ generated by
\[
\left(\begin{array}{cccc}
1&\xi^2x\\
\end{array} \right)
\]
is a $\sigma$-hermitian self-dual code. With the redefinition of $\mathcal{B}$, the code $\varPhi(C)$, will have generator matrix
\[
\left(\begin{array}{cccc}
1&0&0&\xi^2\\
0&1&\xi^6&0
\end{array}\right).
\]
But, unlike the image of this code in Example \ref{ex_main2}, the image here is a euclidian self-dual code over $A$.
\end{example}

It is not always the case that $A$ has a $\sigma$-pseudo-self-dual basis over $R$. For instance, from the full list of indecomposable commutative rings of order 16 given in \cite{martinez_2015}, none of the rings $\F_4[x]/(x^2)$,  ${\F_2[x]/(x^2+y^2,xy)}$, $\F_2[x]/(x^2,y^2)$, ${\Z_4[x]/(x^2+2x)}$, ${\Z_4[x]/(x^2+2)}$, $\Z_4[x]/(x^2+2x+2)$ and $\Z_4[x]/(x^2)$ have a $\sigma$-pseudo-self-dual basis for any proper subring. This is not to say that no duality preserving map exists over these rings as will be seen in the next subsection where we look at an alternative property which guarantees duality preservation. In the case of ${\Z_4[x]/(x^2)}$ it is true that no such duality preserving map exists as this was shown in \cite{martinez_2015_3}. Similarly, we can find examples where $A$ is noncommutative and does not have a $\sigma$-pseudo-self-dual basis over $R$. For instance, $\frac{\F_4[x;\theta]}{\id{x^2}}$ where $\theta$ is the Frobenius map on $\F_4$ extended to $\frac{\F_4[x;\theta]}{\id{x^2}}$.

The next two results have to do with scaling $\sigma$-pseudo-self-dual bases in order to obtain other $\sigma$-pseudo-self-dual bases. In Proposition \ref{prop_sc1}, given a $\sigma$-pseudo-self-dual basis, another $\sigma$-pseudo-self-dual basis was found. In Proposition \ref{prop_TOB}, given an involution $\varphi$ on $A$ and a $\sigma$-pseudo-self-dual basis, a $\varphi\sigma\varphi$-pseudo-self-dual basis is found i.e. the new basis is pseudo-self-dual for the $\varphi\sigma\varphi$ sesquilinear form.

\begin{proposition}
\label{prop_sc1}
Let $H$ be a subgroup of the automorphism group of $A$ and $a\in A^H$ such that $\sigma(a)\in A^H$, that is not a zero divisor and commutes with the elements of $R$. Assume $\{v_1,\ldots,v_n\}$ is $\sigma$-pseudo-self-dual basis with respect to $H$ where for some $\gamma\in A^H$, $Tr_H(v_i\sigma(v_i))=\gamma$. Then $\{av_1,\ldots,av_n\}$ is a $\sigma$-pseudo-self-dual basis for $\gamma'=a\gamma\sigma(a)$ with respect to $H$.
\end{proposition}
\begin{proof}
Since $a,\sigma(a)\in A^H$, by Lemma \ref{lemma_trace} we have that
\begin{eqnarray*}
Tr_H(v_i\sigma(v_i))&=&\gamma\\
aTr_H(v_i\sigma(v_i))\sigma(a)&=&a\gamma\sigma(a)\\
Tr_H(av_i\sigma(v_i)\sigma(a))&=&a\gamma\sigma(a)\\
Tr_H(av_i\sigma(av_i))&=&a\gamma\sigma(a).\\
\end{eqnarray*}
Since $a$ is regular, so is $\sigma(a)$ and $a\gamma\sigma(a)$. The result follows.
\end{proof}

\begin{proposition}
\label{prop_TOB}
Let $H$ be a subgroup of $Aut(A)$ and $\varphi$ be an involution on $A$. Denote by $\theta$, the automorphism $\varphi\sigma$ of $A$, by $\widehat{\sigma}$, the involution $\varphi\sigma\varphi$ of $A$, by $\widehat{H}$ the subgroup $\varphi H\varphi$ of ${\rm Aut}(A)$ and by $\widehat{{\mathcal{B}}}$, the $\theta(R)$-basis of $A$, $\{\theta(v_1),\ldots,\theta(v_n)\}$.  Then
\begin{enumerate}
\item For $\widehat{v}=\theta(v)$ and $\widehat{w}=\theta(w)$ in $\widehat{\mathcal{B}}$, $Tr_{\widehat{H}}(\widehat{v}\sigma(\widehat{w}))=\varphi\left(Tr_{H}(w\sigma(v))\right)$.\label{lemma_inv.four}
\item If $\mathcal{B}$ is $\sigma$-pseudo-self-dual basis w.r.t $H$ over $R$ of $A$ then $\widehat{{\cal B}}$ is a $\widehat{\sigma}$-pseudo-self-dual basis w.r.t $\widehat{H}$ over $\theta(R)$ of $A$.\label{lemma_inv.five}
\end{enumerate}
\end{proposition}
\begin{proof} The first statement follows by  a simple computation:
\begin{eqnarray*}
Tr_{\widehat{H}}(\widehat{v}\sigma(\widehat{w}))&=&\sum_{h\in\widehat{H}}h(\widehat{v}\widehat{\sigma}(\widehat{w})=\sum_{h\in H}\varphi h\varphi(\varphi\sigma(v)\varphi\sigma\varphi(\varphi\sigma(w))\\
&=&\varphi\sum_{h\in H} h(w\sigma(v))=\varphi\left(Tr_{H}(w\sigma(v))\right).
\end{eqnarray*}
For the second statement one verifies that  $\widehat{{\cal B}}$ is a left $\theta(R)$-basis of $A$.
Suppose that   $\mathcal{B}$ is $\sigma$-pseudo-self-dual basis w.r.t $H$ for $\gamma=\left(Tr_{H}(v_i\sigma(v_i))\right)$, where $\gamma$ is not a zero divisor that commutes with the elements of $R$.
From the first statement we obtain for  $1\leq i,j\leq r$ that $Tr_{\widehat{H}}(\theta(v_i)\sigma(\theta(v_j)))=0$ if $i\neq j$ and $Tr_{\widehat{H}}(\theta(v_i)\sigma(\theta(v_i)))=\varphi(\gamma)$. Since $\gamma$ is not a zero divisor, $\varphi(\gamma)$ is not a zero divisor and it only remains to show that  $\varphi(\gamma)$ commutes with elements of $\varphi(R)$. For $r\in R$, since $\sigma(R)=R$ and $\gamma$ commutes with elements of $R$,
$$\varphi(\gamma)\theta(r)=\varphi(\gamma)\varphi(\sigma(r))=\varphi(\sigma(r)\gamma)=\varphi(\gamma\sigma(r))=\varphi(\sigma(r))\varphi(\gamma)=\theta(r)\varphi(\gamma)$$
 and the result follows.\end{proof}

\begin{example}
In Table \ref{table_1}, the table associated with Example \ref{ex_main},  we see that starting with the table for $R=\F_3(\xi)$, swapping rows 2 and 4 and then columns 1 and 3 we end up with the table  for $R=\F_3(x+1)$. This can be explained using Proposition \ref{prop_TOB}. Notice that the number of $\psi$-pseudo-self-dual bases with respect to $\id{\theta^2\psi}$  of $A$ over $\F_3(\xi)$ is 64. Now, $(\theta\psi)\psi(\theta\psi)=\theta^2\psi$, $(\theta\psi)\id{\theta^2\psi}(\theta\psi)=\id{\psi}$ and $(\theta\psi)\psi=\theta$. Since $\theta(\F_3(\xi))=\F_3(x+1)$, Proposition \ref{prop_TOB} shows that there are 64 $\theta^2\psi$-pseudo-self-dual bases with respect to $\id{\psi}$ of $A$ over $\F_3(x+1)$.
\end{example}

\begin{example}
In Example \ref{m22b}, a full analysis is given of $\sigma$-pseudo-self-dual bases on $M_2(\F_2)$. Here we refer to that example. Since $\psi$ conjugated with any involution is $\psi$, applying Proposition \ref{prop_TOB} to a $\psi$-pseudo-self-dual basis will produce a $\psi$-pseudo-self-dual basis. Similarly, $\tau$ conjugated with $\tau$ or $\psi$ is $\tau$, applying Proposition \ref{prop_TOB} in these cases will only produce a $\tau$-pseudo-self-dual basis from $\tau$-pseudo-self-dual basis. More interestingly though, $(\tau\theta)\tau(\tau\theta)=\tau\theta^2$. Now if we apply Proposition \ref{prop_TOB} with the involution $\tau\theta$ to a $\tau$-pseudo-self-dual basis, we obtain a $\tau\theta^2$-pseudo-self-dual basis. Similarly, $(\tau\theta^2)\tau(\tau\theta^2)=\tau\theta$. If we apply Proposition \ref{prop_TOB} with the involution $\tau\theta^2$ to a $\tau$-pseudo-self-dual basis, we obtain a $\tau\theta$-pseudo-self-dual basis. This explains why there are the same number of $\tau$, $\tau\theta$ and $\tau\theta^2$ pseudo-self-dual bases.
\end{example}

\subsection{Symmetric Bases}
In this section assume additionally that $A$ is a free $R$-algebra not simply a left $R$-module. Although we do not assume $A$ is commutative here, if we assume the $R$ basis  $\mathcal{B}$ to be symmetric (defined below), then $A$ must be commutative. See Lemma \ref{lemma_mult}. So, this section in reality is strictly about commutative alphabets. In the commutative case, $\sigma$ is simply an order two automorphism or the identity map. This allows us to consider the Euclidean inner product on commutative rings in our setting which is not possible in the non-commutative case as the identity map is not an involution in that case.

In \cite{mouaha_1992}, symmetric bases were defined for field extensions. As we did with trace orthogonal bases in the last section, we extend the definition of symmetric bases to include the ring extensions we are considering.

\begin{definition}
\label{def_emb}
For $a\in A$, let $M_a$ denote the matrix w.r.t. $\mathcal{B}$ representing the linear transformation of right multiplication by $a$, i.e.
\[
M_a=\left[
      \begin{array}{c}
        \rho(v_1a) \\
        \vdots \\
        \rho(v_ra) \\
      \end{array}
    \right].
\]
We say $\mathcal{B}$ is a \textit{symmetric basis} if for any $v\in\mathcal{B}$, $M_{v}=M_{v}^{^T}$.
\end{definition}

\begin{example}
\label{ex_eff}
Assume $A=\F_2(\xi)[x]/(x^2)$ where $\xi^2+\xi+1=0$ ($\F_2(\xi)\cong\F_4$) and $\mathcal{B}=(1,\xi x+1)$ which is an $\F_2(\xi)$-basis for $A$. See \cite{ling_2001_2} for specific work on codes over $\F_4[x]/(x^2)$. Using the embedding described in Definition \ref{def_emb}, we have $M_1=\left(\begin{array}{cc}1&0\\0&1\end{array}\right)$ and $M_{x+1}=\left(\begin{array}{cc}0&1\\1&0\end{array}\right)$ showing that $\mathcal{B}$ is symmetric.
\end{example}

In the above definition $M_a$ acts on row vectors from $R^r$ on the right. It is well know that sending $a\in A$ to $M_a$ is an embedding of $A$ in $M_r(R)$. Before getting to the main result of the section, we have a lemma with a few simple results needed throughout. The most important of these is that for $A$ to have a symmetric basis, it must be commutative.

\begin{lemma} \label{lemma_mult}
\begin{enumerate}
\item For $a,b\in A$, $\rho(ab)=\rho(a)M_b$ and $ab=\rho^{-1}(\rho(a)M_b)$.
\item If $\mathcal{B}$ is symmetric, then for $a\in A$, $M_a=M_a^{^T}$ and $A$ is commutative.
\end{enumerate}
\end{lemma}
\begin{proof}
Let $a,b\in A$. Since $\rho(a)$ is just the representation of $a$ in $\mathcal{B}$, $\rho(ab)=\rho(a)M_b$ and $ab=\rho^{-1}(\rho(a)M_b)$. If $\mathcal{B}$ is symmetric, since $\{M_v|v\in\mathcal{B}\}$ is a basis for $\{M_a|a\in A\}$ we have that $M_a=M_a^{^T}$ and so
\[
M_{ab}=M_aM_b=(M_aM_b)^{^T}=M_b^{^T}M_a^{^T}=M_bM_a=M_{ba}.
\]
Finally, $ab=ba$.
\end{proof}

As was the case with pseudo-self-dual bases, symmetric bases also preserve duality given an additional condition which is the main result of this section. In addition to a basis being symmetric, for the basis to guarantee duality preservation, $\sigma$ must fix the basis element-wise.

\begin{example}
\label{ex_symmain}
Continuing with Example \ref{ex_eff} where $A=F_2(\xi)[x]/(x^2)$ and $\mathcal{B}=(1,\xi x+1)$, assume $\sigma$ is the Frobenius map on $\F_2(\xi)$ extended linearly to $A$. Remember, $\mathcal{B}$ is symmetric and observe that $\sigma(1)=1$ and $\sigma(\xi x+1)=\xi^2 x+1$ meaning $\sigma$ does not fix  $\mathcal{B}$ element-wise.
Let $C$ be the $[2,1]$ linear code over $A$ generated by
\[
\left(\begin{array}{cccc}
1&1+x\\
\end{array} \right)
\]
which is a $\sigma$-hermitian self-dual code over $A$. The code $\varPhi(C)$, whose generator matrix is,
\[
\left(\begin{array}{cccc}
1&0&\xi&\xi^2\\
0&1&\xi^2&\xi
\end{array}\right)
\]
is not a $\sigma$-hermitian self-dual code over $\F_2(\xi)$.
\end{example}

This next theorem, the second of our two main results, shows that if a basis is symmetric and is fixed by $\sigma$ element-wise, the basis preserves duality.

\begin{theorem}
\label{theo_sym}
Let $C$ be an $n$ length linear code over $A$. Assume $\mathcal{B}$ is symmetric and $\sigma$ fixes the elements of $\mathcal{B}$. Then
\[
\varPhi(C^\perp)\subset\varPhi(C)^\perp.
\]
Furthermore, if $A$ and $R$ are Frobenius rings then
\[
\varPhi(C^\perp)=\varPhi(C)^\perp.
\]
\end{theorem}
\begin{proof}
Let $a=\alpha_1v_1+\dots+\alpha_rv_r,b=\beta_1v_1+\dots+\beta_rv_r\in A$ where $\alpha_i,\beta_i\in R$. Since $\sigma$ fixes the elements of $\mathcal{B}$, $(\sigma(\beta_1),\dots,\sigma(\beta_r))=\rho(\sigma(b))=\rho(1)M_{\sigma(b)}$. So,
\[
\id{\rho(a),\rho(b)}_{R^r}=\sum_{i=1}^r \alpha_i\sigma(\beta_i)=\rho(1)M_a\left(\rho(1)M_{\sigma(b)}\right)^{^T}.\\
\]
Since $\langle a,b\rangle=0$, $0=\sum_{i=1}^n\alpha_i\sigma(\beta_i)$ which implies $0=\sum_{i=1}^nM_{\alpha_i}M_{\sigma(\beta_i)}$. Since $\mathcal{B}$ is symmetric, by Lemma \ref{lemma_mult} we have
\begin{eqnarray*}
\id{\varPhi(a),\varPhi(b)}_{R^{nr}}&=&\sum_{i=1}^n\langle\rho(\alpha_i),\rho(\beta_i)\rangle_{R^{r}}=\sum_{i=1}^n\rho(1)M_{\alpha_i}\left(\rho(1)M_{\sigma(\beta_i)}\right)^{^T}\\
&=&\sum_{i=1}^n\rho(1)M_{\alpha_i}M_{\sigma(\beta_i)}^{^T}\rho(1)^{^T}\\
& =&\rho(1)\left(\sum_{i=1}^nM_{\alpha_i}M_{\sigma(\beta_i)}\right)\rho(1)^{^T}=0.
\end{eqnarray*}
This shows that $\varPhi(C^\perp)\subset\varPhi(C)^\perp$. We know $|C|=|\varPhi(C)|$. Since $A$ and $R$ are Frobenius rings, from Lemma \ref{lem_wood}, we have that
\[
|C^\perp|={|A|^n\over |C|}={|R|^{rn}\over |\varPhi(C)|}=|\varPhi(C)^\perp|.
\]
\end{proof}

\begin{example}
As in Example \ref{ex_symmain} assume $A=\F_2(\xi)[x]/(x^2)$ and $\sigma$ is the Frobenius map on $\F_4$ extended linearly to $A$. But, here assume $\mathcal{B}=(1,x+1)$. It is easy to check that $\mathcal{B}$ is symmetric but unlike the basis in Example \ref{ex_symmain}, $\sigma$ fixes $\mathcal{B}$ element-wise.
We again consider the code $C$ generated by
\[
\left(\begin{array}{cccc}
1&1+x\\
\end{array} \right)
\]
which is a $\sigma$-hermitian self-dual code over $A$. The code $\varPhi(C)$, whose generator matrix is,
\[
\left(\begin{array}{cccc}
1&0&0&1\\
0&1&1&0
\end{array}\right)
\]
is a $\sigma$-hermitian self-dual code over $\F_2(\xi)$. This was expected due to Theorem \ref{theo_sym}.
\end{example}

Using Magma, we determined that there are no pseudo-self-dual bases for $\F_2(\xi)[x]/(x^2)$ over $\F_2(\xi)$ nor $\F_2$. As we have seen there are however symmetric bases over $\F_2(\xi)$. The next Example shows the abundance of symmetric bases of $\F_2(\xi)$ over its various subrings.

\begin{example}
\label{F4[x]/x**2}
Assume $A=\F_2(\xi)[x]/(x^2)$ where $\xi^2+\xi+1=0$. $A$ has $18$ symmetric bases over $\F_2(\xi)$ (out of $90$ bases over $\F_2(\xi$)):
\[
\{1,x+1\},
\{\xi,\xi(x+1)\},
\{\xi^2,\xi^2(x+1)\}
\]
\[
\{1,\xi x+1\},
\{\xi,\xi^2x+\xi )\},
\{\xi^2,x+\xi^2)\}
\]
\[
\{1,\xi^2x+1\},
\{\xi,x+\xi )\},
\{\xi^2,\xi x+\xi^2)\}
\]
\[
\{x+1,\xi x+1\},
\{\xi(x+1),\xi^2x+\xi\},
\{\xi^2(x+1),x+\xi^2\}
\]
\[
\{x+1,\xi^2 x+1\},
\{\xi(x+1),x+\xi\},
\{\xi^2(x+1),\xi x+\xi^2\}
\]
\[
\{x+\xi,\xi^2 x+\xi\},
\{\xi x+\xi^2,x+\xi^2\},
\{\xi^2x+1,\xi x+1\}
\]
\begin{enumerate}
\item If $\sigma$ is the identity map, any of these symmetric bases will preserve duality. The inner product restricts to the Euclidean inner product on $\F_2(\xi)$.
\item If $\sigma$ is defined by $\xi \mapsto \xi^2$ and $x \mapsto x$, the unique symmetric basis that is fixed by $\sigma$ is $\{1,x+1\}$ meaning it is the only basis that preserves duality. The inner product restricts to the hermitian inner product on $\F_2(\xi)$.
\item If $\sigma$ is defined by $\xi \mapsto \xi^2$ and $x \mapsto \xi x$, the unique symmetric basis that is fixed by $\sigma$ is $\{1,\xi^2 x+1\}$ meaning it is the only basis that preserves duality. The inner product restricts to the hermitian inner product on $\F_2(\xi)$.
\item If $\sigma$ is defined by $\xi \mapsto \xi^2$ and $x \mapsto \xi^2 x$, the unique symmetric basis that is fixed by $\sigma$ is $\{1,\xi x+1\}$ meaning it is the only basis that preserves duality. The inner product restricts to the hermitian inner product on $\F_2(\xi)$.
\end{enumerate}

The ring $A$ has also $18$ symmetric bases over $\F_2$ (out of $840$ bases of $A$ over $\F_2$). Since only the identity map can fix the elements of a basis over the prime ring, for any involution of order 2, none of these bases are duality preserving.

The ring $A$ has also $3$ subrings isomorphic to $\F_2[x]/(x^2)$ (in some coding theory papers denoted as $\F_2+u\F_2$), one of them is $R=\F_2[x]/(x^2)\subset\F_4[x]/(x^2)$. There are $24$ symmetric bases of $A$ over $R$, one  such basis is  $\left(1,\xi\right)$ and 
$$M_1=\left(\begin{array}{cc}   1 & 0\\ 0 &1\end{array}\right) \mbox{ and }
M_\xi=\left(\begin{array}{cc} 0 &1\\1 &1\end{array}\right). $$
Another symmetric basis of $A$ over $R$ is
 $\left(    x + 1,
    \xi \right)$ and
$$M_{x+1}=\left(\begin{array}{cc}   x+1 & 0\\ 0 &x+1\end{array}\right) \mbox{ and }
M_\xi=\left(\begin{array}{cc} 0 &x+1\\x+1 &1\end{array}\right). $$
\end{example}

To finish this section, we study specifically $\F_q[x]/(x^t)$ which was considered in \cite{alfaro_2015}. Among other results, a condition on the change of basis matrix which changes from the standard basis, $\{1,x,\dots,x^{t-1}\}$, to some other basis is given which guarantees that the new basis preserves orthogonality. It turns out that this condition is equivalent to the new basis being symmetric. This is the subject of the next proposition.

\begin{proposition}
\label{prop_alf}
Assume $A=\F_q[x]/(x^r)$ and $R=\F_q$. Let $B$ be a change of basis matrix from the standard $\F_q$-basis for $A$ to the $\F_q$-basis $\mathcal{B}$. Then the following are equivalent:
\begin{enumerate}
\item $\mathcal{B}$ is symmetric.
\item $BB^{^T}$ is upper anti-triangular with constant anti-diagonal as well as all parallel diagonals (i.e.~for $b_{ij}=\left(BB^{^T}\right)_{ij}$, if $i+j>r+1$ then $b_{ij}=0$ and if $k=i+j\leq r+1$, $b_{1,k-1}=b_{2,k-2}\dots=b_{k-1,1}$).
\end{enumerate}
\end{proposition}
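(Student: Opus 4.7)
The plan is to reduce the symmetric-basis condition on $\mathcal{B}$ to a single matrix equation and then solve that equation by direct entrywise inspection. The key simplification is that $A=\F_q[x]/(x^r)$ is generated as an $\F_q$-algebra by the single element $x$. Using this together with Lemma \ref{lemma_sym1}, I would first show that $\mathcal{B}$ is symmetric if and only if $M_x=M_x^{T}$: the forward direction is immediate from Lemma \ref{lemma_sym1} applied to $a=x$, while for the converse one writes any $v\in\mathcal{B}$ as a polynomial $v=\sum_{k=0}^{r-1} c_k x^{k}$ in $x$, notes that $M_v=\sum c_k M_x^{k}$, and observes that powers (and hence linear combinations of powers) of a symmetric matrix are again symmetric, so $M_v=M_v^{T}$.

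Next, I would compute $M_x$ in the standard basis $(1,x,\dots,x^{r-1})$: since $x^{i-1}\cdot x=x^{i}$ for $i<r$ and $x^{r-1}\cdot x=0$, this matrix is the nilpotent upper shift $N$ with $N_{i,i+1}=1$ and all other entries zero. The change-of-basis formula then expresses $M_x$ (in $\mathcal{B}$) as a conjugate of $N$ by $B$, and requiring $M_x=M_x^{T}$ rearranges, after multiplying through by $B$ and $B^{T}$, into a commutation identity of the form $KN=N^{T}K$ on the symmetric matrix $K:=BB^{T}$. Using the sparse form of $N$ one computes $(KN)_{ij}=K_{i,j-1}$ for $j\ge 2$ (zero for $j=1$), while $(N^{T}K)_{ij}=K_{i-1,j}$ for $i\ge 2$ (zero for $i=1$). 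Equating these entry by entry yields two sets of conditions: the interior equations $K_{i,j-1}=K_{i-1,j}$ for $i,j\ge 2$ say exactly that $K$ is constant along each anti-diagonal, while the boundary equations in the first row and first column force one end of each short anti-diagonal to vanish, which by constancy propagates to give a whole anti-triangular region of zeros. This is precisely the Hankel-type anti-triangular structure asserted in (2); the reverse implication follows by reading the verification backwards — a symmetric $K=BB^{T}$ of the stated form satisfies $KN=N^{T}K$ by direct check, which reverses into $M_x=M_x^{T}$ and, by the reduction of the first paragraph, into symmetry of $\mathcal{B}$.

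The main technical obstacle is convention bookkeeping: one must track carefully how $B$ is set up relative to $\mathcal{B}$ so that the symmetric matrix appearing in the commutation identity is $BB^{T}$ rather than $B^{T}B$, and so that the anti-triangular orientation matches the one stated in (2), namely $b_{ij}=0$ for $i+j>r+1$. Once this convention is pinned down, the rest of the argument is the routine anti-diagonal manipulation on the sparse matrix $N$ sketched above.
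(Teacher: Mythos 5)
Your approach is correct in substance but genuinely different from the paper's. Your key reduction -- $\mathcal{B}$ is symmetric if and only if $M_x=M_x^{T}$, because $M_v=\sum_k c_kM_x^{k}$ for $v=\sum_k c_kx^{k}$ and powers of a symmetric matrix are symmetric -- does not appear in the paper and is a nice simplification, exploiting that $A$ is generated over $\F_q$ by the single element $x$. The paper instead works with the bilinear form $\left(BB^{T}\right)_{ij}=\rho(x^{i-1})\cdot\rho(x^{j-1})$: for the forward direction it shows $\rho(x^m)\cdot\rho(x^n)=\rho(1)\cdot\rho(x^{m+n})$, and for the converse it upgrades the Hankel condition to the associativity property $\rho(ab)\cdot\rho(c)=\rho(a)\cdot\rho(bc)$ and then reads off $M_a=M_a^{T}$ entrywise. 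Your route replaces that intermediate multiplicativity statement by a single commutation identity with the nilpotent shift $N$, checked entrywise; the paper's route never needs the one-generator structure explicitly, though both arguments are of course special to $\F_q[x]/(x^r)$.

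One concrete item must be fixed, and it is exactly the bookkeeping you flagged, so resolve it as follows. With the convention the paper actually uses (its proof sets $B^{T}=\left[\rho(1)^{T},\rho(x)^{T},\dots,\rho(x^{r-1})^{T}\right]$, i.e.\ the rows of $B$ are the $\mathcal{B}$-coordinates of $1,x,\dots,x^{r-1}$), one has $\rho(a)=\rho_E(a)B$, where $\rho_E$ denotes coordinates in the standard basis, hence $M_x=B^{-1}NB$ and clearing the $B$'s gives $N\left(BB^{T}\right)=\left(BB^{T}\right)N^{T}$, not $KN=N^{T}K$. The difference is not cosmetic: for $NK=KN^{T}$ the boundary equations sit in the \emph{last} row and column, forcing $K_{r,j}=0$ for $j\geq 2$ and $K_{i,r}=0$ for $i\geq 2$, which with anti-diagonal constancy yields exactly $b_{ij}=0$ for $i+j>r+1$; whereas the identity $KN=N^{T}K$ as you computed it forces $K_{1,j}=0$ for $j\leq r-1$, i.e.\ zeros in the opposite region $i+j\leq r$. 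That version is actually false under the paper's convention: for $A=\F_2[x]/(x^2)$ and the symmetric basis $\mathcal{B}=(x+1,1)$ one gets $BB^{T}=\left(\begin{smallmatrix}1&1\\1&0\end{smallmatrix}\right)$, which satisfies $NK=KN^{T}$ but not $KN=N^{T}K$. (If instead you take $B$ with rows $\rho_E(v_i)$, your identity appears with $B^{T}B$ in place of $BB^{T}$, again not matching the statement.) Once $M_x=B^{-1}NB$ and the identity $NK=KN^{T}$ are fixed, every step of your argument is reversible and the equivalence with condition (2) follows as you describe.
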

\begin{proof}
In the following we use the embedding of $A$ in $M_r(\F_q)$ as in Definition \ref{def_emb} where for $a\in A$, $M_a$ is its matrix representation and $\cdot$ is the Euclidean inner product on $\F_q^r$ with respect to $\mathcal{B}$. Then
$B^{^T}=\left[
        \rho(1)^{^T},
        \rho(x)^{^T},
        \ldots,
        \rho(x^{r-1})^{^T}
\right]$,
showing $\left(BB^{^T}\right)_{ij}=\rho(x^{i-1})\cdot\rho(x^{j-1})$.

Assume $\mathcal{B}$ is symmetric. Then for $m,n\geq 0$
\begin{eqnarray*}
\rho(x^m)\cdot\rho(x^n)&=&\rho(1)M_{x^m}\left(\rho(1)M_{x^n}\right)^{^T}=\rho(1)M_{x^m}M_{x^n}^{^T}\rho(1)^{^T}\\
&=&\rho(1)M_{x^m}M_{x^n}\rho(1)^{^T}=\rho(1)M_{1}M_{x^{m+n}}\rho(1)^{^T}\\
&=&\rho(1)M_{1}M_{x^{m+n}}^{^T}\rho(1)^{^T}=\rho(1)M_{1}\left(\rho(1)M_{x^{m+n}}\right)^{^T}\\
&=&\rho(1)\cdot\rho(x^{m+n}).
\end{eqnarray*}
With this it can easily be shown $BB^{^T}$ is upper anti-triangular with constant anti-diagonals.

Now assume $BB^{^T}$ is upper triangular with constant anti-diagonals. This condition is equivalent to saying  $\rho(1)\cdot\rho(x^k)=\rho(x)\cdot\rho(x^{k-1})=\dots=\rho(x^k)\cdot\rho(1)$. In some sense this says that the inner product respects the multiplication in $A$. Using this condition it can be shown that $\rho(ab)\cdot\rho(c)=\rho(a)\cdot\rho(bc)$ for $a,b,c\in A$. Let $a\in A$. Since $\left(M_aM_1^{^T}\right)_{ij}=\rho(v_ia)\cdot\rho(v_j)=\rho(v_i)\cdot\rho(v_ja)=\left(M_1M_a^{^T}\right)_{ij}$
we have that
$
M_a=M_aM_1=M_aM_1^{^T}=M_1M_a^{^T}=M_a^{^T}$.
Hence, $\mathcal{B}$ is symmetric.
\end{proof}

Now that we see that the condition presented in \cite{alfaro_2015} Proposition 5 is equivalent to a basis being symmetric, it is clear that their condition guarantees that the basis preserves duality which is the essence of Proposition 5 and Corollary 1 in \cite{alfaro_2015}.

\section{Examples using cyclic self-dual codes}
\label{sect_examples}
We give some examples of rings and bases which show that the best minimum Gray weight obtained for a family of self-dual codes depends of the choice of the basis for which the Gray weight is with respect to. From this it can be concluded that, for a given problem, not all bases share the same properties and that some care has to be taken when choosing a basis.

A cyclic code $C$ over a finite commutative ring $A$ is an ideal $(g)/(Y^n-1)\subset A[Y]/(Y^n-1)$, where the polynomial $g=g_0+\ldots+g_{n-k-1}Y^{n-k-1}+g_{n-k}Y^{n-k}$ is a divisor of $Y^n-1\in A[Y]$. The generating matrix of the corresponding linear code is
{\small $$G= \left(\begin{array}{ccccccc}
g_0& \ldots & g_{n-k-1} &  g_{n-k}  & 0 &\ldots &0  \\
0  &g_0& \ldots & g_{n-k-1} &  g_{n-k}  & \ldots&0 \\
0  & \ddots &\ddots  & \ddots&  \ddots& \ddots&\vdots \\
0  &  &  & &  && \\
0 & \ldots &0&
g_0& \ldots & g_{n-k-1} &  g_{n-k}
\end{array}\right).
$$}

If $h_0\in A$ is invertible, then the reciprocal polynomial of $h=\sum_{i=0}^r h_iY^i\in A[Y]$ is  $h^*=\frac{1}{h_0}\sum_{i=0}^r h_{r-i}Y^i$. For an automorphism $\sigma$ of order $2$ we also define $h^*_{\sigma}=\frac{1}{\sigma(h_0)}\sum_{i=0}^r \sigma(h_{r-i})Y^i$. It is well known that the cyclic code $(g)/(Y^n-1)\subset A[Y]/(Y^n-1)$ is euclidian self-dual if and only if $Y^n-1=h\cdot g$ and $g=h^*$ and is $\sigma$-hermitian self-dual if and only if $Y^n-1=h\cdot g$ and  $g=h^*_{\sigma}$.

Row $i+1$ of the above generator matrix of $C$ corresponds to the polynomials $Y^i g$. If $A$ is a free $R$-module for some subring $R$ of $A$, then for an $R$-basis of $A$, $\{v_1,\dots,v_r\}$, $C$ is spanned over $R$ by
\[
v_1g,\cdots,v_rg,v_1Yg,\cdots,v_rYg,\cdots,v_1Y^kg,\cdots,v_rY^kg.
\]
Therefore the lines of a generator matrix of $\varPhi(C)$ are given by the image under $\varPhi$ of the vectors corresponding to these polynomials.

\subsection{Euclidean self-dual cyclic codes over $\F_{4}[x]/(x^3)$}
Denote $A=\F_{4}[x]/(x^3)$ and $\F_4=\F_2(\xi)$. We obtain the following decompositions in $A[Y]$ (which is not a unique factorization domain)
\begin{eqnarray*}
Y^2-1 & = & (Y + 1)(Y + 1)\\
 & = & (Y + x^2 + 1)(Y + x^2 + 1)\\
& = & (Y + \xi x^2 + 1)(Y + \xi x^2 + 1)\\
& = & (Y + \xi^2x^2 + 1)(Y + \xi^2x^2 + 1)
\end{eqnarray*}
where each constant term is its own inverse. Therefore there exists four self-dual cyclic codes $(g)/(Y^2-1)\subset A[Y]/(Y^2-1)$ over $A$, generated by $g_1=Y + 1$, $g_2=Y + x^2 + 1$, $g_3=Y + \xi x^2 + 1$ and $g_4=Y + \xi^2x^2 + 1$. In order to map those self-dual codes over $A$ to self dual codes over $\F_4$ we verified the symmetric basis criteria for all $30240$ $\F_4$-bases of $A$ and found $480$ distinct symmetric bases. Also, there are no (pseudo)-self-dual bases over $\F_4$. We consider the following two symmetric bases of $A$ over $R=\F_4$.
$${\cal B}_1=[
    \xi^2x + 1,\,
    \xi x,\,
    \xi^2x^2 + \xi^2x + 1
];\; {\cal B}_2= [
    \xi^2x^2 + \xi^2,
    x^2 + \xi^2x + \xi,
    \xi^2x^2 + x + 1
]
$$
\begin{enumerate}
\item Using the basis ${\cal B}_1$ we obtain that the image codes over $\F_4$ of the self dual codes $(g)/(Y^2-1)\subset A[Y]/(Y^2-1)$ over $A$ has the following generator matrix (Here the lines of the generator matrix corresponds to the coefficients of $(\xi^2x + 1)\cdot g,(\xi x)\cdot g$ and $(\xi^2x^2 + \xi^2x + 1)\cdot g$ in the basis ${\cal B}_1$) and  weight enumerator:
\begin{itemize}
\item $g_1$:
$\left(\begin{array}{cccccc}
1&0&0&1&0&0\\
0&1&0&0&1&0\\
0&0&1&0&0&1
\end{array}\right)$  and $1+9w^2+27w^4+27w^6$.
\item $g_2$:
$\left(\begin{array}{cccccc}
\xi^2&0&\xi&1&0&0\\
0&1&0&0&1&0\\
\xi&0& \xi^2&0&0&1
\end{array}\right)$ and $1+3w^2+12w^3+3w^4+36w^5+9w^6$
\item $g_3$:
$\left(\begin{array}{cccccc}
\xi&0 &\xi^2&1&0&0\\
0&1&0&0&1&0\\
\xi^2&0&\xi&0&0&1
 \end{array}\right)$ and $1+3w^2+12w^3+3w^4+36w^5+9w^6$
\item $g_4$:
$\left(\begin{array}{cccccc}
0&0&1&1&0&0\\
0&1&0&0&1&0\\
1&0&0&0&0&1
\end{array}\right)$ and $1+9w^2+27w^4+27w^6$
\end{itemize}
\item Using the basis ${\cal B}_2$ we obtain that the image codes over $\F_4$ of the self dual codes $(g)/(Y^2-1)\subset A[Y]/(Y^2-1)$ over $A$ has the following generator matrix and  weight enumerator:
\begin{itemize}
\item  $g_1$ : we obtain the same code   $\varPhi(C)$ as for ${\cal B}_1$.
\item  $g_2$:
$\left(\begin{array}{cccccc}
\xi&\xi&1&1&0&0\\
\xi&0 &\xi^2&0&1&0\\
1 &\xi^2& \xi^2&0&0&1
\end{array}\right)$ and $1+6w^3+27w^4+18w^5+12w^6$
\item $g_3$:
$\left(\begin{array}{cccccc}
0& \xi^2&\xi&1&0&0\\
\xi^2 &\xi^2&1&0&1&0\\
\xi&1&\xi&0&0&1
 \end{array}\right)$ and $1+6w^3+27w^4+18w^5+12w^6$
\item $g_4$:
$\left(\begin{array}{cccccc}
\xi^2&1& \xi^2&1&0&0\\
1&\xi&\xi&0&1&0\\
\xi^2&\xi&0&0&0&1
\end{array}\right)$ and $1+6w^3+27w^4+18w^5+12w^6$
\end{itemize}
\end{enumerate}
Using the first basis we would not have obtained a self-dual code over $\F_4$ of optimal hamming weight $3$ (cf. \cite{gaborit_2003}).

\subsection{Hermitian self-dual cyclic codes over $\F_9[x]/(x^2-2)$}
Consider the ring $A=\F_3(\xi)[x]/(x^2+1)$ where $\xi^2+2\xi+2=0$ from Example \ref{ex_main}. We obtain two factorizations of $Y^2-1$ in $A[Y]$ (which is not a unique factorization domain since $A\cong\F_9\oplus\F_9$):
\[
Y^2-1=(Y + \xi^6x)(Y + \xi^2x)=(Y+1)(Y+2).
\]

Each factor produces a $[2,1]$ cyclic code over $A$ none of which are Euclidian self-dual codes nor $\theta^2$-hermitian self-dual codes. However, the codes generated by the factors $Y+\xi^2x$ and $Y+\xi^6x$ generate codes that are both $\psi$-hermitian self-dual and $\psi\theta^2$-hermitian self-dual.
\begin{itemize}
\item From the table in Example \ref{ex_main} we see there are $64$ $\psi$-pseudo-self-dual bases with respect to the subgroup $\id{\psi}$. Since $\id{\psi}$ fixes $\F_3(\xi)$ element-wise, each one of the bases will preserve duality. So, by Theorem \ref{theo_tob}, the $\F_3(\xi)$ images of the codes generated by $Y+\xi^2x$ and $Y+\xi^6x$ will each be a hermitian self-dual code over $\F_3(\xi)$. The best hamming distance of any of these images is 2.
\item From the table in Example \ref{ex_main} we see there are $96$ $\theta^2\psi$-pseudo-self-dual bases with respect to the subgroup $\id{\psi}$. Since $\id{\psi}$ fixes $\F_3(\xi)$ element-wise, each one of the bases will preserve duality. So, by Theorem \ref{theo_tob}, the $\F_3(\xi)$ images of the codes generated by $g_1=Y+\xi^2x$ and $g_2=Y+\xi^6x$ will each be a hermitian self-dual code over $\F_3(\xi)$.
\begin{enumerate}
\item For the basis ${\cal B}_1=(\xi x,\xi^6)$ the generator matrix of the image of this codes are respectively
\[
\left(\begin{array}{cccc}
  0  & \xi&   1  & 0\\
\xi^7  & 0 &  0   &1
\end{array} \right) \mbox{ and }
\left(\begin{array}{cccc}
  0 &\xi^5 &  1  & 0\\
\xi^3  & 0  & 0 &  1
\end{array} \right) \]
which are both of hamming distance $2$.
\item For the basis ${\cal B}_2=(2x+\xi^3,\xi^6x+\xi)$ the generator matrix of the image of this codes are respectively
\[
\left(\begin{array}{cccc}
\xi^2 &\xi^2  & 1  & 0\\
\xi^2 &\xi^6 &  0  & 1
\end{array} \right)  \mbox{ and }
\left(\begin{array}{cccc}
\xi^6 &\xi^6 &  1  & 0\\
\xi^6 & \xi^2  & 0  & 1
\end{array} \right) \]
which are both of hamming distance $3$.
\end{enumerate}
\end{itemize}
This illustrates that, the properties of all image codes of some family of self-dual codes $C$ over $A$, can depend on the choice of the $\sigma$-pseudo-self-dual basis.

\section*{Acknowledgements}
The first author would like to thank IRMAR for the support of his visit to their institution during which time this work was initiated.

\appendix
\section{Pseudo-Self-Dual Bases of $M_2(\F_2)$}
\label{ap_M2F2}
\begin{example}
\label{m22b}
Let $A=M_2(\F_2)$. In \cite{bachoc_1997}, it was shown that codes over $A$ can be mapped to codes over $\F_4$. In \cite{alahmadi_2013} self-dual cyclic codes over $A$ were studied. In the process they show (Proposition 2 of \cite{alahmadi_2013}) that there is a map from $A^n$ to $\F_4^{2n}$ that preserves self-orthogonality for a particular hermitian form. It turns out that this map is defined using a $\sigma$-self-dual $\F_4$ basis for $A$ and the hermitian form is the standard $\sigma$-hermitian form on $A$ where $\sigma$ is the anti-transpose on $A$. Much more can be said about this map, specifically that it preserves duality for not only the hermitian form based on the anti-transpose on $A$ but also the hermitian form based on the transpose on $A$ and a few others. This is what we shall show here. The following identifications for the elements of $M_2(\F_2)$ will be used throughout.
\[
I=\left[\begin{array}{cc}1 &0 \\0 & 1 \end{array} \right],\;
z=\left[\begin{array}{cc}0 &0 \\0 & 0 \end{array} \right],\;
i=\left[\begin{array}{cc}0 &1 \\1 & 0 \end{array} \right],\;
a=\left[\begin{array}{cc}1 &1 \\1 & 1 \end{array} \right]
\]
\[
e_1=\left[\begin{array}{cc}1 &0 \\0 & 0 \end{array} \right],\;
e_2=\left[\begin{array}{cc}0 &1 \\0 & 0 \end{array} \right],\;
e_3=\left[\begin{array}{cc}0 &0 \\1 & 0 \end{array} \right],\;
e_4=\left[\begin{array}{cc}0 &0 \\0 & 1 \end{array} \right]
\]
\[
u_1=\left[\begin{array}{cc}0 &1 \\1 & 1 \end{array} \right],\;
u_2=\left[\begin{array}{cc}1 &0 \\1 & 1 \end{array} \right],\;
u_3=\left[\begin{array}{cc}1 &1 \\0 & 1 \end{array} \right],\;
u_4=\left[\begin{array}{cc}1 &1 \\1 & 0 \end{array} \right]
\]
\[
t=\left[\begin{array}{cc}1 &1\\0 & 0 \end{array} \right],\;
b=\left[\begin{array}{cc}0 &0 \\1 & 1 \end{array} \right],\;
l=\left[\begin{array}{cc}1 &0 \\1 & 0 \end{array} \right],\;
r=\left[\begin{array}{cc}0 &1 \\0 & 1 \end{array} \right]
\]

We first compute the automorphisms and anti-automorphisms of $A$. The set of units of $R$ is $\{I,i,u_1,u_2,u_3,u_4\}$. The units $i$, $u_2$ and $u_3$ are of multiplicative order 2 and the units $u_1$ and $u_4$ are of multiplicative order 3. The set $\{I,i,u_1,u_2\}$ is an $F_2$-basis for $A$. Replacing $i$ or $u_2$ with $u_3$ will still be a basis, as will replacing $u_1$ with $u_4$. Now, any automorphism or anti-automorphism of $A$ must send units to units of the same order. There are 12 such maps $A\to A$ that send units to units of the same order. They form a group isomorphic to a subgroup of $S_6$. Consider three of these maps which we express in cycle notation on the set of units. Let $\tau$ be the map $(u_2u_3)$, $\psi$ be the map$(u_1u_5)$ and $\theta$ be the map $(iu_2u_3)$. It turns out that $\tau$ is the transpose, $\psi$ is the anti-transpose (the reflection of a matrix about the anti-diagonal) and
\[
\theta:\left[\begin{array}{cc}a&b\\c&d\\\end{array}\right]\mapsto \left[\begin{array}{cc}b+d&a+b+c+d\\b&a+b\\\end{array}\right].
\]
Note that $\tau$ and $\psi$ are anti-automorphisms on $A$ and $\theta$ is an automorphism on $A$. From this we deduce that the automorphism group is isomorphic to $S_3$ and generated by $\tau\psi$ and $\theta$ which is $\{id,\theta,\theta^2,\tau\psi,\tau\psi\theta,\tau\psi\theta^2\}$. The set of anti-isomorphisms are $\{\tau,\psi,\tau\theta,\psi\theta,\tau\theta^2,\psi\theta^2\}$. The subset of these, $\{\tau,\psi,\tau\theta,\tau\theta^2\}$, is the set of involutions of $A$, the remaining anti-isomorphisms having order 6.

Let $R=\left\{I,z,u_1,u_4\right\}$. Notice, $R$ is a subring of A isomorphic to $\F_4$ and $A$ is a free left $R$-module with left $R$-basis $\mathcal{B}=(I,i)$. Notice $\psi$ is simply the Frobenius map when restricted to $R$ and is the conjugation map on $A$ from Section 4.2 in \cite{alahmadi_2013}. Of course $\psi(R)=R$. Let  $H$ be the group generated by $\theta$ whose fixed ring is $R$. Since $Tr_{H}(I\psi(i))=Tr_{H}(i\psi(I))=z$ and $Tr_{H}(I\psi(I))=Tr_{H}(i\psi(i))=I$, $\mathcal{B}$ is a $\psi$-self-dual basis w.r.t. $H$. Since $A$ and $R$ are Frobenius, by Theorem \ref{theo_tob}, the dual of a code over $A$ is mapped to the dual of the image of the code over $R$ where the hermitian form being considered\ is based on the anti-transpose. Since $\psi$ restricted to $R$ is simply the Frobenius map on $\F_4$ we are considering standard $\psi$-hermitian form on $R$.

It can be similarly shown that $\mathcal{B}$ is a $\tau$-self-dual basis w.r.t. $H$. The difference here is that $\tau$ is the identity on $R$. So, the $\tau$-hermitian form restricted to $R$ is simply standard Euclidean form. Using Magma, we found all $\sigma$-pseudo-self-dual $\F_4$-bases and $\F_2$-bases for $M_2(\F_2)$ for each given involution, $\sigma$.
\begin{enumerate}
\item These are the $\sigma$-pseudo-self-dual $\F_4$-bases w.r.t. $H=\id{\theta}$ for $M_2(\F_2)$ for each involution $\sigma$:
\begin{itemize}
\item $\sigma=\psi$: $
\{I,i\},\{I,u_2\},\{I,u_3\},\{i,u_1\},\{i,u_4\},\{u_1,u_2\},\{u_1,u_3\}$, $\{u_2,u_4\}$, $\{u_3,u_4\}$
\item $\sigma=\tau$ :
$
\{I,i\},
\{e_1,e_2\},
\{l,r\},
\{e_3,e_4\},
\{u_1,u_2\},
\{u_3,u_4\}
$.
\item $\sigma=\tau\theta$:  $
\{I,u_2\},
\{i,t\},
\{u_1,u_3\},
\{r,a\},
\{b,e_4\},
\{e_2,t\}
$
\item
$\sigma=\tau\theta^2$: $
\{I,i\},
\{u_1,u_2\},
\{u_3,u_4\},
\{e_1,e_2\},
\{e_3,e_4\}
$
\end{itemize}
\item
 The following are the $\sigma$-pseudo-self-dual $\F_2$-bases w.r.t. $H=Aut(A)$ for $A$ for each involution $\sigma$ :
 \begin{itemize}
\item
$\sigma=\psi$: none\\
$\sigma=\tau$:
$\{u_1,u_2,u_3,u_4\},
\{e_1,e_2,e_3,e_4\}$
\item
$\sigma=\tau\theta$ : $
\{i,u_1,u_3,u_4\},
\{r,a,e_2,t\} $
\item $\sigma=\tau\theta^2$:
$\{i,u_1,u_2,u_4\},
\{r,a,b,e_3\}$
\end{itemize}
\end{enumerate}
\end{example}

\section{Examples of Symmetric Bases}

\begin{example} The Galois ring $A=GR(4,2)$ defined in example \ref{GR} has $24$ symmetric bases (among which are  the $8$  $\sigma$-pseudo-self-dual bases previously found). An example of a new basis is   $\{1,\xi+2\}$.
\end{example}
\begin{example} The ring $\Z_4[x]/( x^2-2x)$ (see \cite{martinez_2015_3}) has $16$ symmetric basis.  An example of such a  basis is   $\{1,x+1\}$.
\end{example}
\begin{example} The ring $\Z_4[x]/( x^2-2)$  has $16$ symmetric bases.  An example of such a  basis is   $\{1,x+1\}$.
\end{example}
\begin{example} The ring $\Z_4[x]/(x^2-x)$ (\cite{gao_2014}) has no $\sigma$-pseudo-self-dual bases, but has $8$ symmetric bases over $\Z_4$.  An example of such a  basis is $\{x+3,x\}$.
\end{example}

\begin{example} The ring $\F_2[x]/( x^4)$  has no $\sigma$-pseudo-self-dual bases, but has $12$ symmetric bases over $\F_2$.  An example of such a  basis is   $\{
        x^2 + x,
        x^3 + x,
        1,
        x^3 + 1
    \}$.
\end{example}
 \begin{example} The ring $\F_2[x]/( x^2+y^2, xy)$
 (see \cite{martinez_2015})
 has $16$ symmetric bases over $\F_2$.  An example of such a  basis is   $\left(
       y^2 + 1,
        y,
        x,
        1
    \right)$.
 \end{example}

  \begin{example} The ring $A=\F_2[x]/( x^2,y^2)$   has $8$ symmetric basis over $\F_2$.  An example of such a  basis is   $\left(
        y + 1,
        x + 1,
        xy + x + y + 1,
        1
    \right)$. The ring $A$ has also $7$ subrings isomorphic to $\F_2[x]/( x^2)$. For the subring $R=\{0,1,x,x+1\}$ there are $48$ symmetric bases of $A$ over $R$. One  such a basis is  $\left(  y + 1,
        x + 1\right)$ for which
$$M_{y+1}=\left(\begin{array}{cc}   0 & x+1\\ x+1 &0\end{array}\right) \mbox{ and }
M_{x+1}=\left(\begin{array}{cc}  x+1 &0\\0 & x+1\end{array}\right). $$
 \end{example}
  \begin{example} The ring $\F_2[x]/( x^3)$  of order $8$ has no self dual basis but has four  symmetric bases:
 $\left(
        x^2 + x,
        x^2 + 1,
        1
 \right)$,
 $\left(
        x^2 + 1,
        x,
        1
 \right)$,
 $\left(
        x^2 + x,
        x + 1,
        x^2 + x + 1
 \right)$,
 $\left(
        x,
        x + 1,
        x^2 + x + 1
 \right)$
  \end{example}
     \begin{example} The ring $\F_2[x]/( x^3-1)$  of order $8$ has no self dual basis but has three  symmetric bases:
 $\left(
        x^2 + 1,
        x + 1,
        x^2 + x + 1
 \right)$,
 $\left(
        x^2 + x,
        x + 1,
        x^2 + x + 1
\right)$, \\
 $\left(
        x^2 + x,
        x^2 + 1,
        x^2 + x + 1
 \right)$.
   \end{example}
\begin{example} The self dual basis $ \left(
        x,
        x + 1
 \right)$ of  the ring $\F_2[x]/( x^2 +x)$ of order $4$ is also a symmetric basis.
  \end{example}
   \begin{example} The ring $\F_2[x]/( x^2)$  of order $4$ (cf. (\cite{dougherty_1999_3})) has no $\sigma$-pseudo-self-dual basis but has the unique  symmetric basis    $\left(
              x + 1,
        1
    \right)$.
 \end{example}

\bibliographystyle{amsplain}
\bibliography{../SteveSzaboRefs}

\end{document}